\documentclass[11pt, letterpaper, onecolumn, accepted=2023-10-25]{quantumarticle}
\pdfoutput=1

\usepackage{amsmath}
\usepackage{amsthm}
\usepackage{amssymb}
\usepackage{algorithm}
\usepackage{subfigure}
\usepackage[dvipsnames]{xcolor}
\usepackage{graphicx}
\usepackage{algpseudocode}
\usepackage{comment}
\usepackage{mathtools}
\usepackage{enumitem}
\usepackage{tikz}
\usepackage{url}
\usepackage{float}
\usepackage[margin=1in]{geometry}
\usepackage{qcircuit}
\usepackage{bbm}
\usepackage[noblocks]{authblk}

\usepackage{tabularx}

\usepackage{fancyref}
\usepackage{hyperref}
\hypersetup{colorlinks=true,citecolor=red,linkcolor=blue}

\usetikzlibrary{positioning}
\usepackage{pgfplots}
	\usetikzlibrary{
		calc,
		patterns,
		positioning
	}
	\pgfplotsset{
		compat=1.16,
		samples=200,
		clip=false,
		my axis style/.style={
			axis x line=middle,
			axis y line=middle,
			legend pos=outer north east,
			axis line style={
				->,
			},
			legend style={
				font=\footnotesize
			},
			label style={
				font=\footnotesize
			},
			tick label style={
				font=\footnotesize
			},
			xlabel style={
				at={
					(ticklabel* cs:1)
				},
				anchor=west,
				font=\footnotesize,
			},
			ylabel style={
				at={
					(ticklabel* cs:1)
				},
				anchor=west,
				font=\footnotesize,
			},
			xlabel=$T_{tot}$,
			ylabel=$T_{max}$
		},
	}
	\tikzset{
		>=stealth
	}

\newtheorem{theorem}{Theorem}[section]
\newtheorem{lemma}[theorem]{Lemma}

\newtheorem{corollary}[theorem]{Corollary}

\newtheorem{fact}[theorem]{Fact}
\newtheorem{remark}[theorem]{Remark}
\newtheorem{claim}[theorem]{Claim}

\newcommand{\vertiii}[1]{{\left\vert\kern-0.25ex\left\vert\kern-0.25ex\left\vert #1 \right\vert\kern-0.25ex\right\vert\kern-0.25ex\right\vert}}

\newcommand{\lrb}[1]{\left ( #1 \right )}
\newcommand{\lrsb}[1]{\left [ #1 \right ]}
\newcommand{\lrcb}[1]{\left \{ #1 \right \}}

\newcommand{\abs}[1]{\left | #1 \right |}
\newcommand{\norm}[1]{\left \| #1 \right \|}

\newcommand{\lonenorm}[1]{\left \| #1 \right \|_{1}}

\newcommand{\mytr}[1]{\operatorname{tr}\lrsb{#1}}

\newcommand{\ket}[1]{\left| #1 \right\rangle}
\newcommand{\bra}[1]{\left\langle #1 \right|}
\newcommand{\ketbra}[2]{|#1\rangle \langle #2|}

\renewcommand{\exp}[1]{\operatorname{exp} \lrb{#1}}

\renewcommand{\log}[1]{\operatorname{log} \lrb{#1}}

\renewcommand{\ln}[1]{\operatorname{ln} \lrb{#1}}

\newcommand{\R}{\mathbb{R}}
\newcommand{\C}{\mathbb{C}}
\newcommand{\Z}{\mathbb{Z}}

\renewcommand{\P}[1]{\mathbb{P}\lrsb{#1}}
\newcommand{\E}[1]{\mathbb{E}\lrsb{#1}}
\newcommand{\wt}[1]{\widetilde{{\mathcal #1}}}

\newcommand{\defeq}{\coloneqq}

\DeclareMathOperator{\rep}{Re}
\DeclareMathOperator{\imp}{Im}
\renewcommand{\i}{\mathbf{i}}

\newcommand{\bfX}{{\bf X}}
\newcommand{\bfY}{{\bf Y}}
\newcommand{\bfZ}{{\bf Z}}
\newcommand{\bft}{{\bf t}}
\newcommand{\bfb}{{\bf b}}

\newcommand*{\fancyrefthmlabelprefix}{thm}
\newcommand*{\fancyreflemlabelprefix}{lem}
\newcommand*{\fancyrefcorlabelprefix}{cor}
\newcommand*{\fancyrefdefilabelprefix}{defi}
\frefformat{plain}{\fancyreflemlabelprefix}{lemma\fancyrefdefaultspacing#1}
\Frefformat{plain}{\fancyreflemlabelprefix}{Lemma\fancyrefdefaultspacing#1}
\frefformat{plain}{\fancyrefthmlabelprefix}{theorem\fancyrefdefaultspacing#1}
\Frefformat{plain}{\fancyrefthmlabelprefix}{Theorem\fancyrefdefaultspacing#1}
\frefformat{plain}{\fancyrefcorlabelprefix}{corollary\fancyrefdefaultspacing#1}
\Frefformat{plain}{\fancyrefcorlabelprefix}{Corollary\fancyrefdefaultspacing#1}
\frefformat{plain}{\fancyrefdefilabelprefix}{definition\fancyrefdefaultspacing#1}
\Frefformat{plain}{\fancyrefdefilabelprefix}{Definition\fancyrefdefaultspacing#1}

\title{Quantum algorithm for ground state energy estimation using circuit depth with exponentially improved dependence on precision}

\author[1]{Guoming Wang}
\author[2]{Daniel Stilck Fran\c{c}a}
\author[1,3]{Ruizhe Zhang}
\author[1,4]{Shuchen Zhu}
\author[5]{Peter D. Johnson}

\affil[1]{Zapata Computing Canada Inc., Toronto, ON M5V 2Y1, Canada}
\affil[2]{Univ Lyon, ENS Lyon, UCBL, CNRS, Inria, LIP, F-69342, Lyon Cedex 07, France}
\affil[3]{Department of Computer Science, University of Texas at Austin, Austin, TX 78712, USA}
\affil[4]{Department of Computer Science, Georgetown University, Washington, DC 20057, USA}
\affil[5]{Zapata Computing Inc., Boston, MA 02110 USA}

\begin{document}

\maketitle

\begin{abstract}
    A milestone in the field of quantum computing will be solving problems in quantum chemistry and materials faster than state-of-the-art classical methods.
    The current understanding is that achieving quantum advantage in this area will require some degree of fault tolerance.
    While hardware is improving towards this milestone, optimizing quantum algorithms also brings it closer to the present.
    Existing methods for ground state energy estimation are costly in that they require a number of gates per circuit that grows exponentially with the desired number of bits in precision. We reduce this cost exponentially, by developing a ground state energy estimation algorithm for which this cost grows linearly in the number of bits of precision.
    Relative to recent resource estimates of ground state energy estimation for the industrially-relevant molecules of ethylene-carbonate and PF$_6^-$, the estimated gate count and circuit depth is reduced by a factor of 43 and 78, respectively.
    Furthermore, the algorithm can use additional circuit depth to reduce the total runtime. 
    These features make our algorithm a promising candidate for realizing quantum advantage in the era of early fault-tolerant quantum computing.
\end{abstract}

\section{Introduction}

When will quantum computers solve valuable problems that are out of reach for state-of-the-art classical approaches?
To understand this future moment of quantum advantage, we must identify what computational problems are most apt and determine what quantum algorithms will be able solve them in the nearest time frame.
Despite some recent challenges being illuminated \cite{larsson2022chromium}, estimating the ground state energy of quantum systems \cite{aspuru2005simulated} remains one of the leading contenders for the first realization of practial quantum advantage.
Solving this problem efficiently with a quantum computer would be of high value to areas including combustion \cite{gonthier2020identifying}, batteries \cite{kim2022fault, delgado2022simulate}, and catalysts \cite{goings2022reliably}.
Considering that the progress in quantum hardware has consistently improved, we are urged to investigate: what are the minimal quantum resources needed to realize quantum advantage with ground state energy estimation {(GSEE)}?

\vspace{0.5cm}

\noindent\textbf{\large{Previous methods for ground state energy estimation}}

\noindent {To estimate ground state energy on early quantum computers, researchers have designed more efficient algorithms using fewer qubits, gates, and ancillas.} Most of these algorithms are based on the variational quantum eigensolver (VQE) \cite{peruzzo2014variational}. 
These algorithms do not have performance guarantees and recent works have identified roadblocks for the practicality of such approaches through the measurement problem \cite{gonthier2020identifying, johnson2022reducing} and issues with optimization \cite{mcclean2018barren, anschuetz2022beyond}.
Yet, quantum algorithms with performance guarantees \cite{aspuru2005simulated, babbush2018encoding, lin2022heisenberg, dong2022ground} demand unfortunately-large quantum resources, requiring hundreds of logical qubits and over billions of operations per circuit (e.g. greater than $10^{10}$ T gates \cite{kim2022fault}).
The error correction overhead needed to run such quantum circuits is far beyond what can be realized on today's hardware.
Towards realizing practical quantum advantage sooner, we develop quantum algorithms in the goldilocks regime of having provable performance guarantees while also exponentially reducing the required number of operations.

The operations involved in many GSEE algorithms, including ours, are controlled time evolution operations, $c$-$\exp{\i Ht}$. 
For the algorithms considered, the total number of operations per circuit and the circuit depth are proportional (ignoring logarithmic factors) to the evolution time $t$.
We will refer to this measure as the \emph{circuit depth}.
The circuit depth required by a GSEE algorithm is typically costed in terms of $\epsilon$, the target accuracy of the ground state energy estimate, and $\eta$, a lower bound on the overlap of the input state $\rho$ with the ground state\footnote{An important caveat for all known ground state energy estimation methods (c.f Table I of \cite{dong2022ground}) is that if $\eta$ is extremely small, then we have little hope of accurately estimating the ground state energy. Thus, it is common to assume that the Hamiltonian of interest admits a good ground state approximation.
}.
In contrast to previous GSEE methods, the costs of our GSEE algorithms will depend on $\Delta$, a lower bound on the energy gap (i.e. the difference between the smallest and next-smallest eigenvalue of $H$), typically governs the performance of ground state \emph{preparation} methods \cite{dong2022ground}. 
With these parameters established we are able to describe the costs of previous methods for GSEE that aim to minimize the quantum resources.
{Recently, building upon the ideas of \cite{somma2019quantum}, Lin and Tong \cite{lin2022heisenberg} designed a quantum algorithm for ground state energy estimation with circuit depths scaling as $\tilde{O}(1/\epsilon)$, which requires just a single ancilla qubit and involves no costly circuit operations beyond controlled time evolutions. 
This algorithm requires running the circuits of depth $\tilde{O}(\epsilon^{-1})$ multiple times, leading to a total runtime of $\tilde{O}(\epsilon^{-1}\eta^{-2})$.
Later, Dong et al.~\cite{dong2022ground} improved on this result, developing an algorithm with similar characteristics, yet achieving a runtime of $\tilde{O}(\epsilon^{-1}\eta^{-1})$. Both of these methods achieve the so-called Heisenberg-limited scaling in the runtime with respect to $\epsilon$. }
{Meanwhile, Wan et al.} \cite{wan2021randomized} combined the ideas in \cite{lin2022heisenberg} with the principles of randomized Hamiltonian evolution (QDRIFT) \cite{campbell2019random} to propose a ground state energy estimation algorithm which trades off between number of non-Clifford gates and runtime. {Ding and Lin \cite{ding2023even} and Ni et al. \cite{ni2023onlowdepth} also discovered low-depth algorithms for  phase estimation which achieve the Heisenberg-limited scaling of quantum runtime while having a diminishing prefactor for the maximum circuit depth when the initial state approaches the target eigenstate.} 
{There has also been progress in developing low-cost (or low-depth) quantum algorithms for simultaneous estimation of multiple eigenvalues in the past years \cite{obrien2019quantum, somma2019quantum, dutkiewicz2022heisenberg, ding2023simultaneous, li2023onadative}.
Our work is mostly influenced by \cite{somma2019quantum, lin2022heisenberg} which promoted the idea of using quantum circuits to general time signals and utilizing the tools from classical signal processing to analyze them. This time series analysis reveals the target property of the spectrum of the Hamiltonian.}

{All of the GSEE methods prior to ours employ quantum circuits whose depth scales inversely with $\epsilon$ the target accuracy \footnote{{After the first version of this paper was posted on arXiv, Ding and Lin \cite{ding2023even} proposed a GSEE algorithm which achieves essentially the same circuit depth and quantum runtime as ours. Although both their algorithm and our algorithm are based on classical postprocessing of the data from Hadamard tests, the specific ways of exploiting such data are quite different. The algorithm of \cite{ding2023even} takes a direct data fitting approach and solves nonlinear optimization} {problems to find the values that best match the observed data. By contrast, we take an indirect Fourier filtering approach and use the data to construct the convolution of the spectral measure and a Gaussian derivative filter, and then infer the ground state energy from the shape of this convolution. The classical computation is simpler here. On the other hand, \cite{ding2023even} also proposed techniques for reducing the circuit depth for GSEE when the overlap between the initial state and the ground state is large (without a spectral-gap assumption). This scenario is not considered in this work.}}. In the early fault-tolerant setting, this circuit depth  requirement may} place a limit on the size of problem instances that such methods can accommodate. It is reasonable to assume that early fault-tolerant quantum computers will be limited in number of physical qubits. Larger problem instances require more logical qubits to be encoded in these physical qubits. This entails an increase in the error rate for logical operations (as the code distance has been reduced). This increase in error rate compromises the number of operations that can be implemented per circuit (assuming a fixed circuit error tolerance of the algorithm). To ``unlock’’ larger instances requires running a GSEE algorithm that can succeed with fewer operations per circuit (or less circuit depth). Motivated by these considerations, the question addressed by this work is: \emph{is it possible to unlock larger problem instances for early fault-tolerant quantum computers through an exponential improvement in the accuracy-dependence scaling of circuit depth in ground state energy estimation?}

\vspace{0.5cm}

\noindent\textbf{\large{Summary of main results}}

\noindent We develop and analyze low-depth ground state energy estimation (GSEE) algorithms with high accuracy for which the circuit depth scales exponentially better than $\tilde{O}(1/\epsilon)$. As is typical, the circuit depth and quantum runtime of the algorithm is measured in terms of the number of controlled evolution operations c-$\exp{-2\pi \i H}$ referred to as \emph{Hamiltonian evolution time}.
\begin{theorem}[Low-depth GSEE, informal version of Theorem~\ref{thm:gsee_main_formal}]\label{thm:gsss_main_informal}
Let $H$ be a Hamiltonian with spectral gap at least $\Delta$. Suppose we can prepare an initial state $\rho$ such that the overlap with the ground state satisfies $\langle E_0|\rho|E_0\rangle \geq \eta$.
Given $\Delta$, $\eta$, and  sufficiently small $\epsilon$, there exists an algorithm to estimate the ground state energy within accuracy $\epsilon$ with high probability such that:
\begin{itemize}
    \item The circuit depth, measured in maximal Hamiltonian evolution time,
    is 
    \begin{align}
        {\cal T}_{\rm max}={\cal O}\lrb{\Delta^{-1}\cdot \mathrm{poly}\log{\epsilon^{-1}\eta^{-1}\Delta}}.
    \end{align}
    \item The quantum runtime, measured in total Hamiltonian evolution time, is  
    \begin{align}
        {\cal T}_{\rm tot}={\cal O}\lrb{\eta^{-2}\epsilon^{-2}\Delta\cdot \mathrm{poly}\log{\epsilon^{-1}\eta^{-1}\Delta}}.
    \end{align}
\end{itemize}
\label{thm:gsee_main_informal}
\end{theorem}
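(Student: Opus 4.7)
The plan is to estimate $E_0$ by Fourier-filtering the spectral measure $p_\rho(x) = \sum_k \langle E_k|\rho|E_k\rangle\, \delta(x-E_k)$: I will use controlled Hamiltonian-evolution Hadamard tests to draw stochastic samples of the characteristic function $\phi(t) := \operatorname{Tr}(\rho\, e^{-2\pi i H t})$, and then, in classical postprocessing, assemble estimates of the convolution $(p_\rho \ast F_\sigma)(x)$ for a carefully shaped filter $F_\sigma$. The Fourier identity $(p_\rho \ast F_\sigma)(x) = \int \widehat{F_\sigma}(t)\, e^{2\pi i t x}\, \phi(t)\, dt$ shows that sampling $t$ from the distribution $\pi(t) \propto |\widehat{F_\sigma}(t)|$ and performing one Hadamard test at that $t$ gives an unbiased estimator of $(p_\rho \ast F_\sigma)(x)$ whose per-circuit depth never exceeds the essential support radius of $\widehat{F_\sigma}$.

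I would take $F_\sigma$ to be (an essentially bandlimited version of) a Gaussian derivative, $F_\sigma(u) \propto (u/\sigma^2)\, e^{-u^2/(2\sigma^2)}$, because it is odd with slope $\Theta(1/\sigma^3)$ and a single sign change at the origin. The ground state then contributes $\eta\cdot F_\sigma(x - E_0)$ to the convolution, whose unique zero in an $O(\sigma)$ window about $E_0$ lies exactly at $E_0$. Choosing $\sigma = \Delta/\mathrm{polylog}(\epsilon^{-1}\eta^{-1}\Delta)$ ensures, via the Gaussian tail bound $e^{-\Delta^2/(2\sigma^2)} \leq \mathrm{poly}(\epsilon\eta)$, that all excited eigenvalues $E_k \geq E_0 + \Delta$ contribute at most $\mathrm{poly}(\epsilon)$ (relative to the local slope) to the convolution near $E_0$, so the true zero is shifted by at most $\epsilon$. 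This is the step that buys the exponential improvement in the $\epsilon$-dependence of the depth: the essential Fourier support of $F_\sigma$, and hence the maximum per-circuit Hamiltonian evolution time, is $\tilde{O}(1/\sigma) = \tilde{O}(1/\Delta)$.

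With the filter fixed, GSEE reduces to locating a zero crossing from noisy samples. I would first obtain an $O(\sigma)$-sized bracketing window for $E_0$ in a cheap warm-up phase (for example, smoothing $p_\rho$ with a plain Gaussian of the same width and finding the leftmost $x$ at which the smoothed profile first exceeds $\Theta(\eta)$), and then perform a logarithmic-depth sign-test/binary search inside that window to pin the crossing down to accuracy $\epsilon$. A Chernoff plus union-bound calculation controls the sample cost: the signal near the crossing has magnitude $\Theta(\eta/\sigma^2)$ and slope $\Theta(\eta/\sigma^3)$, while each per-shot reweighted estimator is bounded in absolute value by $\|\widehat{F_\sigma}\|_1 = \Theta(1/\sigma^2)$, so $N = \tilde{O}(\sigma^2/(\eta^2 \epsilon^2))$ Hadamard tests suffice. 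Since $\pi$ places average evolution time $\Theta(1/\sigma)$ per shot, the total Hamiltonian evolution time works out to $N/\sigma = \tilde{O}(\sigma/(\eta^2 \epsilon^2)) = \tilde{O}(\Delta/(\eta^2 \epsilon^2))$, matching both parts of the theorem.

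The hardest step will be reconciling the depth requirement with the bias requirement on $F_\sigma$. A pure Gaussian derivative has the needed super-polynomial spatial decay (which suppresses excited-state contamination by $\mathrm{poly}(\epsilon)$) but an unbounded Fourier tail (which would spoil the depth guarantee), whereas a hard time cutoff gives a clean depth bound but induces Gibbs-type spatial oscillations that bias the zero crossing by $\gg \epsilon$. I would resolve this by smoothly windowing $\widehat{F_\sigma}$ at scale $\tilde{O}(1/\sigma)$ and carefully arguing, via standard Fourier tail bounds, (i) that the windowed filter still has spatial tails smaller than $\mathrm{poly}(\epsilon \eta)$ outside an interval of length $O(\Delta)$ around $E_0$, and (ii) that the $L^1$ mass of the windowed $\widehat{F_\sigma}$ grows only polylogarithmically, so that both the depth bound and the sample-complexity bound survive with only polylog overhead.
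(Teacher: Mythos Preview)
Your proposal is correct and follows essentially the same route as the paper: Gaussian-derivative filtering of the spectral measure, $\sigma=\tilde\Theta(\Delta)$, Hadamard-test sampling with importance weights $|\hat F_\sigma|$, and the same Hoeffding-based sample count $\tilde O(\sigma^2/(\eta^2\epsilon^2))$ leading to $\tilde O(\Delta/(\eta^2\epsilon^2))$ total evolution time.

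Two points of comparison are worth recording. First, your ``hardest step'' is not actually hard: since $\hat g_\sigma(\xi)=2\pi i\,\xi\,e^{-(\sigma\pi\xi)^2/2}$ already has Gaussian decay, a \emph{hard} frequency cutoff at $T=\pi^{-1}\sigma^{-1}\sqrt{2\ln(\cdot)}$ yields the uniform bound $\|g_\sigma-g_{\sigma,T}\|_\infty\le\int_{|\xi|>T}|\hat g_\sigma(\xi)|\,d\xi=\tfrac{4}{\pi\sigma^2}e^{-\sigma^2\pi^2T^2/2}$, which is already $\mathrm{poly}(\epsilon\eta/\Delta)$-small; there are no Gibbs-type oscillations because you are truncating a rapidly decaying transform, not the Fourier series of a discontinuous function. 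The paper therefore skips the smooth windowing entirely. Second, the paper locates the crossing by a brute-force grid of $M=O(\sigma/\epsilon)$ points and takes the $\arg\min_j|h_j|$, rather than a sign-test binary search; and for the warm-up it simply invokes the Lin--Tong estimator at accuracy $O(\sigma)$ rather than building a separate Gaussian-peak detector. These are cosmetic differences that do not affect the stated $\mathcal T_{\max}$ or $\mathcal T_{\rm tot}$.
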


\definecolor{mygray}{gray}{0.6}

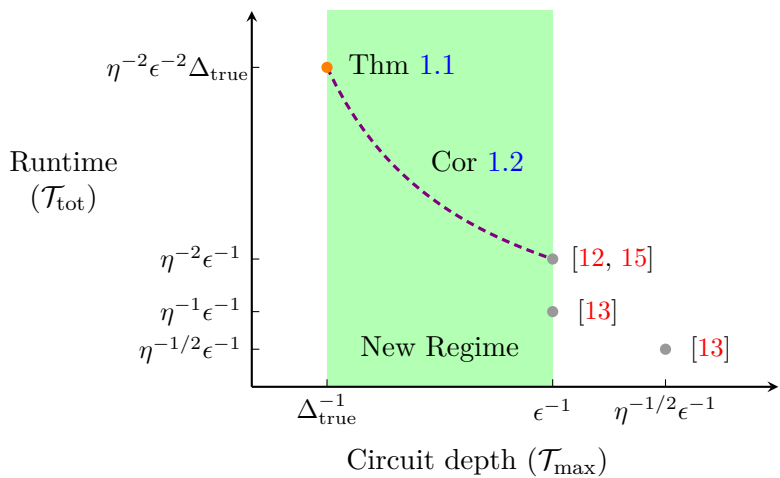
\begin{figure}
    \centering

\begin{tikzpicture}
\node[rectangle,
    fill = green!30!white,
    minimum width = 3cm, 
    minimum height = 5cm,
    anchor = south west] (r) at (1,0) {};
\node[] at (2.5, 0.5) {New Regime};
\draw[thick=0.1, ->] (0, 0) -- (7, 0) node[below=1mm] {};
\node[] at (3, -1) {Circuit depth (${\cal T}_{\max}$)};
\draw[thick=0.1,->] (0, 0) -- (0, 5) node[left] {};
\node[] at (-2.5, 3.0) {Runtime};
\node[] at (-2.5, 2.5) {(${\cal T}_{\rm tot}$)};
\draw (1, 0.1) -- (1, 0);
\draw (0.1, 1.7) -- (0, 1.7) node[left]{\small $\eta^{-2}\epsilon^{-1}$};
\draw (0.1, 1) -- (0, 1) node[left]{\small $\eta^{-1}\epsilon^{-1}$};
\draw (0.1, 0.5) -- (0, 0.5) node[left]{\small $\eta^{-1/2}\epsilon^{-1}$};
\node[] at (1, -0.3) {\small $\Delta_{\textup{true}}^{-1}$};
\draw (4, 0) -- (4, 0.1);
\draw (0, 4.25) -- (0.1, 4.25) node[left] {\small $\eta^{-2}\epsilon^{-2}\Delta_{\textup{true}}$};
\node[] at (4, -0.3) {\small $\epsilon^{-1}$};
\draw (5.5, 0) -- (5.5, 0.1);
\node[] at (5.5, -0.3) {\small $\eta^{-1/2}\epsilon^{-1}$};
\draw[violet, densely dashed, very thick, domain=1:4] plot (\x,{8.5/(\x+1)});
\node[] at (3, 3) {Cor~\ref{cor:interpolation_informal}};

\node at (1,4.25) [circle,fill=orange,inner sep=1.5pt, label={[shift={(1,-0.3)}]Thm~\ref{thm:gsee_main_informal}}]{};
\node at (4,1.7) [circle,fill=mygray,inner sep=1.5pt, label={[shift={(0.8,-0.4)}]\small \cite{lin2022heisenberg,wan2021randomized}}]{};
\node at (4,1) [circle,fill=mygray,inner sep=1.5pt, label={[shift={(0.6,-0.4)}]\small \cite{dong2022ground}}]{};
\node at (5.5,0.5) [circle,fill=mygray,inner sep=1.5pt, label={[shift={(0.6,-0.4)}]\small \cite{dong2022ground}}]{};
\end{tikzpicture}
    \caption{
    This figure shows the landscape of early fault-tolerant GSEE algorithms plotted according to their runtime and circuit depth measured in terms of total evolution time (${\cal T}_{\rm tot}$) and maximal evolution time (${\cal T}_{\max}$), respectively.
    The green region indicates the new low-depth regime introduced in this work.
    The orange dot corresponds to the $\Delta^{-1}$-depth GSEE algorithm (Theorem~\ref{thm:gsee_main_informal}) when $\Delta = \Delta_{\textup{true}}^{-1}$, and the curve shows the smooth trade off between ${\cal T}_{\max}$ and ${\cal T}_{\rm tot}$ described in Corollary \ref{cor:interpolation_informal}. We also remark that the right-most dot which shows an algorithm in \cite{dong2022ground} requires multiple ancilla qubits and multi-qubit controlled operations, whereas the algorithms in this work and \cite{lin2022heisenberg,wan2021randomized} only use a single ancilla qubit. For simplicity, we have ignored all the poly-logarithmic factors.} 
    \label{fig:intro}
\end{figure}

\begin{table}[]
\begin{tabular}{ccccc}
\hline
\multicolumn{1}{|c|}{Molecule} & \multicolumn{1}{c|}{Gap} & \multicolumn{1}{c|}{Gap Lower Bound } & \multicolumn{1}{c|}{Gate Reduction \cite{kim2022fault}} & \multicolumn{1}{c|}{Gate Reduction \cite{lin2022heisenberg}} \\ \hline
\multicolumn{1}{|c|}{EC} & \multicolumn{1}{c|}{$264\pm 20$ mHa} & \multicolumn{1}{c|}{244 mHa} & \multicolumn{1}{c|}{$43\times$} & \multicolumn{1}{c|}{$16\times$} \\ \hline
\multicolumn{1}{|c|}{$\textup{PF}_6^6$} & \multicolumn{1}{c|}{$468\pm20$ mHa} & \multicolumn{1}{c|}{$448$ mHa} & \multicolumn{1}{c|}{$78\times$} & \multicolumn{1}{c|}{$28\times$} \\ \hline
&    &    &    &                      
\end{tabular}
\caption{
This table displays estimated circuit cost savings afforded by Algorithm \ref{alg:low_depth_gsee} for two molecules relevant to battery design analyzed in previous work \cite{kim2022fault}.
For these molecules in the 
cc-pVDZ basis,
we can estimate the energy gaps using EOM-CCSD calculations with ORCA software \cite{neese2012orca,neese2018software}.
The target accuracy considered in \cite{kim2022fault} was $\epsilon = 1$ mHa.
The standard approach to quantum phase estimation (ignoring the cost due to imperfect ground state preparation) uses a circuit with $2/\epsilon$ applications of $c$-$\exp{2\pi \i H}$ to achieve an $\epsilon$ accurate estimate with high probability.
We include the various logarithmic factors (c.f. Algorithm \ref{alg:low_depth_gsee}) {and set an input state overlap value of $\eta = 1/1000$, which is conservative (i.e. lower) relative to the values found in \cite{tubman2018postponing}.}
In the last column we give cost reductions relative to recent methods \cite{lin2022heisenberg}, which use $ 2/\pi\epsilon$ applications of $c$-$\exp{2\pi \i H}$.}
\label{tab:resource_estimates}
\end{table}
\noindent Using the costs established in Theorem \ref{thm:gsss_main_informal}, Table \ref{tab:resource_estimates} provides resource estimates that show the reduction in gates per circuit for molecules of industrial interest. 
The reduction in gates per circuit affords a reduction in the fault-tolerant overhead required to implement ground state energy estimation.
These reductions may help to bring such problem instances within reach of earlier fault-tolerant quantum architectures, potentially realizing quantum advantage sooner.

For some molecules, it might be the case that the runtime of this low depth algorithm is too high to outperform state-of-the-art classical methods for solving the same problem.
Our second main result (c.f. Corollary \ref{rem:interpolation}) is that we can trade circuit depth for total runtime reduction. This gives a means of speeding up the overall algorithm.
\begin{corollary}
\label{cor:interpolation_informal}
Let $H$ be a Hamiltonian with spectral gap $\Delta_{\rm true}$. Suppose we can prepare an initial state $\rho$ such that the overlap with the ground state satisfies $\langle E_0|\rho|E_0\rangle \geq \eta$. Then for arbitrary $\alpha \in [0, 1]$, given $\Delta_{\rm true}$, $\eta$ and sufficiently small $\epsilon$, there exists an algorithm to estimate the ground state energy within accuracy $\epsilon$ with high probability such that:
\begin{itemize}
    \item The circuit depth, measured in maximal Hamiltonian evolution time, 
    is 
    \begin{align}
        {\cal T}_{\rm max}=\tilde{{\cal O}}\lrb{\epsilon^{-\alpha} \Delta_{\textup{true}}^{-1+\alpha}}.
    \end{align}
    \item The quantum runtime, measured in total Hamiltonian evolution time, is 
    \begin{align}
        {\cal T}_{\rm tot}=\tilde{{\cal O}}\lrb{\eta^{-2}\epsilon^{-2+\alpha}\Delta_{\textup{true}}^{1-\alpha}}.
    \end{align}
\end{itemize}
\end{corollary}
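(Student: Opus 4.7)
\textbf{Proof proposal for Corollary~\ref{cor:interpolation_informal}.} The plan is to derive the interpolation statement directly from Theorem~\ref{thm:gsee_main_informal} by feeding the algorithm a deliberately pessimistic spectral-gap lower bound, treating that bound as a tunable parameter. Concretely, I would fix $\alpha\in[0,1]$ and invoke the algorithm of Theorem~\ref{thm:gsee_main_informal} with the gap parameter
\begin{equation}
\Delta \;\defeq\; \epsilon^{\alpha}\,\Delta_{\rm true}^{\,1-\alpha}.
\end{equation}
The first thing to check is that this is a valid input, i.e.\ that $\Delta \le \Delta_{\rm true}$, so that the hypothesis ``spectral gap at least $\Delta$'' of Theorem~\ref{thm:gsee_main_informal} is satisfied. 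Since $\alpha \ge 0$, this is equivalent to $\epsilon \le \Delta_{\rm true}$, which is absorbed into the ``sufficiently small $\epsilon$'' hypothesis of the corollary. All other hypotheses of Theorem~\ref{thm:gsee_main_informal} ($\eta$-overlap, knowledge of $\epsilon$ and of the gap lower bound) are inherited verbatim from the corollary statement, so the $\epsilon$-accurate, high-probability guarantee on the output follows immediately.

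Once the invocation is justified, the two cost bounds are a one-line calculation from Theorem~\ref{thm:gsee_main_informal}. For the circuit depth,
\begin{equation}
{\cal T}_{\max} \;=\; \tilde{\cal O}\!\lrb{\Delta^{-1}} \;=\; \tilde{\cal O}\!\lrb{\epsilon^{-\alpha}\,\Delta_{\rm true}^{-1+\alpha}},
\end{equation}
and for the total runtime,
\begin{equation}
{\cal T}_{\rm tot} \;=\; \tilde{\cal O}\!\lrb{\eta^{-2}\epsilon^{-2}\Delta} \;=\; \tilde{\cal O}\!\lrb{\eta^{-2}\epsilon^{-2+\alpha}\,\Delta_{\rm true}^{\,1-\alpha}},
\end{equation}
matching the two cost bounds claimed in the corollary. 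Checking the endpoints as sanity tests: $\alpha=0$ reproduces Theorem~\ref{thm:gsee_main_informal} applied with the true gap, while $\alpha=1$ yields the Heisenberg-limited regime ${\cal T}_{\max}=\tilde{\cal O}(\epsilon^{-1})$, ${\cal T}_{\rm tot}=\tilde{\cal O}(\eta^{-2}\epsilon^{-1})$, consistent with the previous low-depth algorithms plotted in Figure~\ref{fig:intro}.

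The ``hard part,'' insofar as there is one, is really just bookkeeping: I would verify that the polylogarithmic factors suppressed inside the $\tilde{\cal O}$ of Theorem~\ref{thm:gsee_main_informal}, which depend on $\epsilon^{-1}\eta^{-1}\Delta$, remain polylogarithmic in $\epsilon^{-1}\eta^{-1}\Delta_{\rm true}$ after the substitution---this holds because $\Delta\le\Delta_{\rm true}\le 1$ (by normalization of $H$), so $\log(1/\Delta)=\mathcal{O}(\log(1/\epsilon)+\log(1/\Delta_{\rm true}))$. I would also note that the substitution is purely a choice of the algorithmic parameter and does not require any change to the underlying Fourier-filtering procedure, which is why no new technical ingredient beyond Theorem~\ref{thm:gsee_main_informal} is needed.
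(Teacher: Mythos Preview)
Your proposal is correct and is essentially the same argument the paper gives for Corollary~\ref{rem:interpolation}: one simply feeds the algorithm of Theorem~\ref{thm:gsee_main_formal} the artificial gap lower bound $\Delta=\tilde{\cal O}(\epsilon^{\alpha}\Delta_{\rm true}^{1-\alpha})$ and reads off the resulting costs. The only minor caveat is your remark that $\Delta_{\rm true}\le 1$ ``by normalization of $H$''---the paper explicitly does not assume $\|H\|\le 1$, but this does not affect the polylog bookkeeping since $\log\Delta = \alpha\log\epsilon + (1-\alpha)\log\Delta_{\rm true}$ is absorbed into the $\tilde{\cal O}$ regardless.
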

\noindent Through the era of early fault-tolerant quantum computing, as quantum architectures are able to realize deeper quantum circuits, the trade-off in Corollary \ref{cor:interpolation_informal} may lead to a crossover point into quantum advantage.


\section{Low-depth ground state energy estimation}

\begin{figure}[t]
    \centering
    \includegraphics[width=0.95\textwidth,trim={0.05cm 0 0.05cm 0},clip]{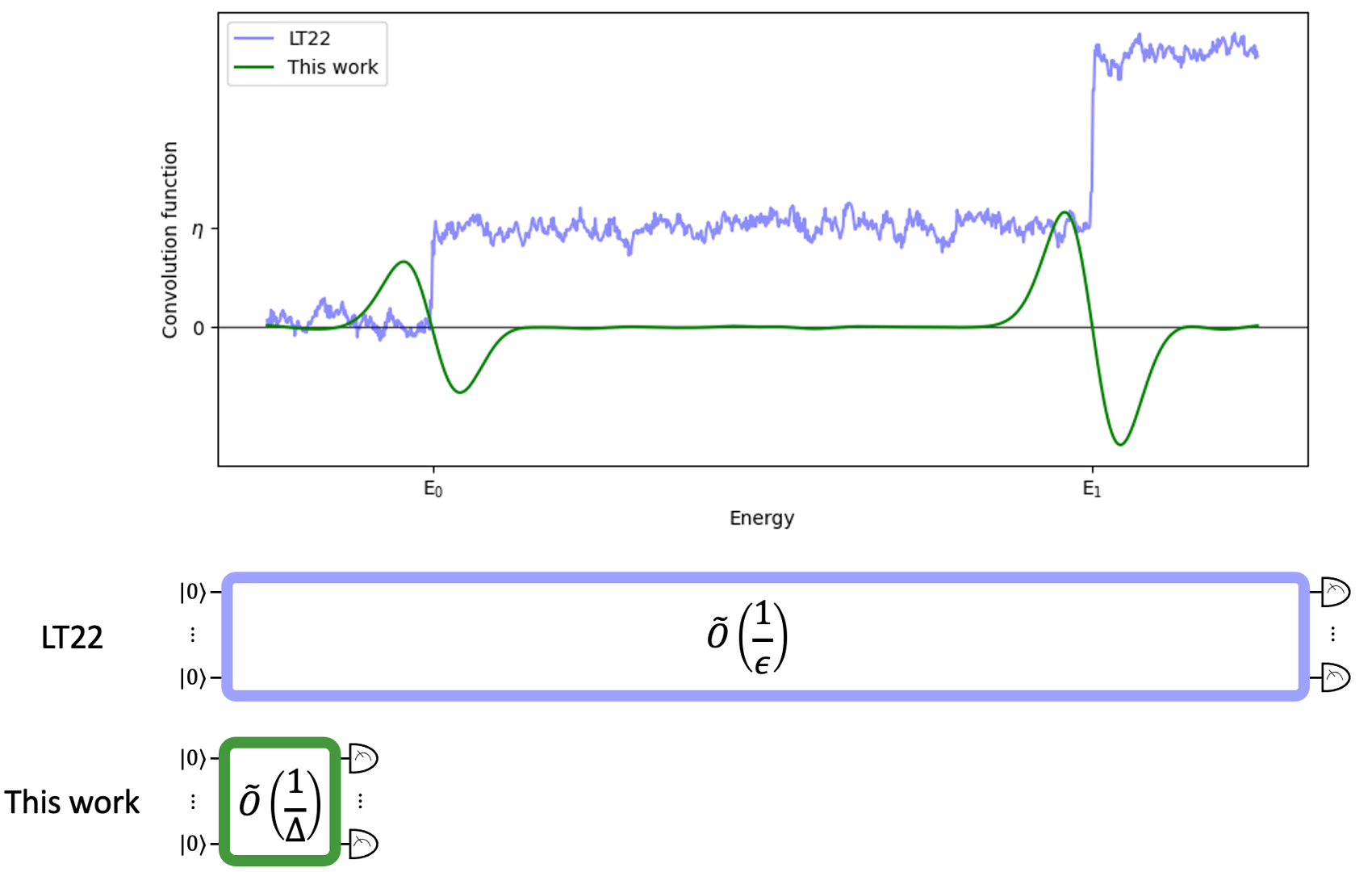}
    \caption{
    {This figure qualitatively compares the convolution functions and circuit depths used in the ground state energy estimation method of Ling and Tong \cite{lin2022heisenberg} (blue curve) and the method developed here (green curve). 
    The blue curve is an example estimate of the approximate cumulative distribution function (see Eq.~(16) in \cite{lin2022heisenberg}), while the green curve is an example estimate of Eq.~\eqref{eqn:convolution} as output from Step \ref{step:conv_est} in the algorithm below.}
    The LT22 method uses a Heaviside convolution, while our method uses a Gaussian derivative convolution. Their method requires a steep jump in the convolution function, which necessitates $\tilde{O}(1/\epsilon)$-depth circuits. Our method only requires that the contribution of the excited state energies to the convolution function does not interfere too much with that of the ground state energy.
    This affords the use of a less-steep convolution function, which only requires $\tilde{O}(1/\Delta)$-depth circuits. The trade-off is that our method requires more samples, leading to an increased total runtime.}
    \label{fig:GSEE_circuits}
\end{figure}

Before introducing the method, we define the ground state energy estimation problem.
Suppose we are given a classical description of a quantum Hamiltonian $H$. This Hamiltonian has (unknown) spectral decomposition $H=\sum_{j=0}^{N-1} E_j \ket{E_j} \bra{E_j}$, where $E_0<E_1 \le E_2 \le ...\le E_{N-1}$ are the eigenvalues of $H$, and the $\ket{E_j}$'s are orthonormal eigenstates of $H$. Let $\rho$ be an easy-to-prepare state (of the same dimension as $H$). Let $p_j \defeq \bra{E_j} \rho \ket{E_j}$ be the overlap between $\rho$ and $\ket{E_j}$, for $0\le j\le N-1$. We assume that two numbers $\eta \in (0, 1)$ and $\Delta>0$ are given such that $p_0 = \bra{E_0} \rho \ket{E_0} \ge \eta$ and $E_1 - E_0 \ge \Delta$. Our goal is to estimate $E_0$ with accuracy $\epsilon$ and confidence $1-\delta$, i.e. to output a sample from a random variable $\hat{E}_0$ such that the failure probability satisfies
\begin{align}
    \P{|\hat{E}_0-E_0|>\epsilon}<\delta,
\end{align}
for given small $\epsilon>0$ and $\delta \in (0, 1)$. Furthermore, we want to achieve this by using 
limited-depth quantum circuits and classical post-processing of the quantum measurement outcome data. 

{Our algorithm does not require the Hamiltonian $H$ to be normalized to work, i.e. it does not rely on an assumption like $\norm{H} \le 1$. The circuit depth and quantum runtime of our algorithm will be measured in the maximal and total evolution time of $H$, respectively. Note that the gate complexity of simulating $e^{\i H t}$ is proportional to $\norm{H}t$. Therefore, rescaling $H$, $\Delta$ and $\epsilon$ by the same constant simultaneously does not change the number of elementary gates in our algorithm (see Theorem \ref{thm:gsee_main_formal}).}

\begin{figure}[ht]
    \centering
		\begin{displaymath}
		\Qcircuit @C=1.0em @R=1.2em {
			& & & &\\			
			\lstick{\ket{0}}
			&\gate{\mathrm{H}}	 &\ctrl{1}	& \gate{\mathrm{W}}
			& \gate{\mathrm{H}}			&\meter\\
			\lstick{\rho} 	 & \qw & \gate{e^{-\i H \tau}} 		 
			&\qw &\qw &\qw
		}		
		\end{displaymath}		
\caption{Hadamard test circuit parameterized by the Hamiltonian evolution time $\tau$. $\mathrm{H}$ is the Hadamard gate and $\mathrm{W}$ is either $I$ or $S^\dagger$, where $S$ is the phase gate.}
    \label{fig:hadamard_test}
\end{figure}
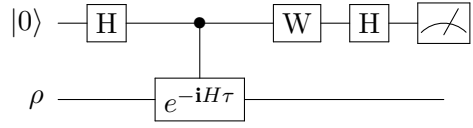

\paragraph{Time signals from Hadamard tests.}

In our GSEE algorithm, the role of the quantum computer is simply to provide statistical estimates of $\mytr{\rho e^{-\i H \tau}}$.
The quantum circuit we use to generate these estimates is known as a Hadamard test and is shown in Figure~\ref{fig:hadamard_test}. Labeling this outcome ${\bf b}\in \{+1, -1\}$, the average value of ${\bf b}$ output by the Hadamard test circuit is
\begin{align}
\label{eq:hadamard_estimator}
\mathbb{E}[{\bf b}] = \textup{Re}\left[\mytr{\rho e^{-\i H \tau}}\right],
\end{align}
when $W=I$ and it is equal to the imaginary part when $W=S^\dagger$ where $S$ is the phase gate.
It is helpful to view the quantity $\mytr{\rho e^{-\i H \tau}}$ as a complex-valued time signal, with $\tau$ being the time.
This time signal encodes information about the eigenvalues of $H$ and the density operator $\rho$.
In particular, if we can determine how this signal depends on the ground state energy $E_0$, then we might be able to estimate $E_0$ from the time signal.
Although we are unable to exactly determine the time signal, we can estimate the real and imaginary parts of the signal at any time $\tau$ to within any desired accuracy using sufficiently many Hadamard test measurement outcomes as described above.
The time cost of each Hadamard test is proportional to $\tau$ and the total time cost will depend on how many Hadamard tests, or samples, we take over the different chosen times $\tau$.

\paragraph{Filtering the spectrum}
Here we introduce the method for estimating and processing the signals from the Hadamard test data. The Fourier transform (or frequency signal) of the ideal time signal $\mytr{\rho e^{-\i H \tau}}$ 
is equal to the so-called spectral measure of $H$ associated with the initial state $\rho$ and is given by
\begin{align}\label{eq:def_p}
    p(x) \defeq \sum_{j=0}^{N-1} p_j \delta(x - E_j).
\end{align}
Although $p(x)$ itself cannot be determined exactly from Hadamard test data, we explain how to accurately estimate any convolution of $p(x)$ with a \emph{filter function} $f(x)$. 
For our purposes, the filter function is used as a tool for organizing the time signal data from a limited time window into useful information about the spectrum of $H$.
We briefly explain how to evaluate (or, rather, estimate) the complex number $(f \ast p)(x)$ for any given value $x$ using low-depth Hadamard test circuits.

Three key features make low-depth convolution estimation possible.
First, the convolution can be expressed as a linear combination of $\hat{p}(t)=\mytr{\rho e^{-2\pi\i H t}}$,
\begin{align}
(f\ast p)(x)=\int_{-\infty}^{\infty}\hat{f}(t)\hat{p}(t)e^{2\pi \i xt}dt,
\end{align}
where $\hat{f}(t)$ denotes the Fourier transform of $f(x)$.
Second, these traces can be estimated from the Hadamard test data as shown in 
Eq. \ref{eq:hadamard_estimator}.
Third, the circuit depth of each Hadamard test is proportional to $t$. We can limit the circuit depth used in the algorithm and still obtain an accurate estimate of the convolution by judiciously truncating the integral approximation
\begin{align}
\label{eq:truncation}
\int_{-\infty}^{\infty}\hat{f}(t)\hat{p}(t)e^{2\pi \i xt}dt \approx \int_{-T}^{T} \hat{f}(t) \mytr{\rho e^{-2\pi \i Ht}} e^{2\pi \i xt}dt.
\end{align}
As explained in detail in Algorithm \ref{alg:eval_conv}, the strategy we use to estimate $(f\ast p)(x)$ uses a so-called multi-level Monte Carlo approach.
An unbiased estimate of $(f\ast p)(x)$ is constructed by first (classically) sampling a time $t$  in $[-T,T]$ drawn from a distribution proportional to $|\hat{f}_T(t)|$.
Conditioned on this outcome $t$, a sample is then drawn from each of the real ($W=I$) and the imaginary ($W=S^{\dagger}$) part Hadamard tests with time $t$, returning $X$ and $Y$, respectively. From these outcome data $(t, X, Y)$, we construct a random variable whose average value is equal to $(f\ast p)(x)$,
\begin{align}
     Z(x;t,X,Y) = \|\hat{f}_{T}\|_{1}e^{2\pi \i (t x+\phi(t))}\cdot (X+\i Y).
     \label{eq:def_z}
\end{align}
where $e^{\i 2 \pi \phi(t)}=\hat{f}_T(t)/|\hat{f}_T(t)|$.
It is important to note that the samples can be generated ahead of time; the choice of where to evaluate $(f\ast p)(x)$ can be made after this data is gathered.

\paragraph{Designing the filters}
The filter function plays two roles in determining the algorithm performance.
First, the shape of the filter determines how easily the ground state energy can be determined from $(f\ast p)(x)$.
Second, the smoothness of the filter determines the severity of the truncation $T$ that can be withstood, and therefore the minimal viable circuit depth.
This second role is what affords the exponential reduction in circuit depth of our method.

Estimating the ground state energy from the convolution can be understood through an example.
In \cite{lin2022heisenberg} they choose $f(x)$ to be a periodic Heaviside function\footnote{$\Theta(x)=\begin{cases}1&\text{if}~x\in[2k\pi,(2k + 1)\pi)\\ 0 & \text{if}~x\in [(2k - 1)\pi,2k\pi)\end{cases}~~~\forall k\in \Z$.}.
As shown in Figure \ref{fig:GSEE_circuits}, this particular choice of convolution results in a series of steps, the first of which is located at the ground state energy $E_0$.
Their algorithm proceeds by using a binary search to locate this first step.
The drawback of this approach is that in order to resolve this first step to accuracy $\epsilon$, the truncation {order} must be $\tilde{\mathcal O}(1/\epsilon)$.
This means that the circuit depth of the Hadamard test scales as $\tilde{\mathcal O}(1/\epsilon)$.
We design a filter function and energy estimation strategy that requires a time window that scales as ${\mathcal O}(\log{1/\epsilon})$.
This corresponds to an exponential improvement in the circuit depth dependence on the accuracy.

The key observation for the design of low-depth filter functions is as follows. If a filter function satisfies the following properties, then it can isolate the minimum eigenvalue from the others well and the corresponding convolution can be evaluated easily:
\begin{enumerate}
    \item The filter function $f(x)$ has an exponentially-decaying tail, i.e., $|f(x)|=\exp{-\Omega(|x|)}$ for sufficiently large $x$. {This ensures that $(f*p)(x) \approx p_0 f(x-E_0)$ for $x$ near $E_0$ (i.e. the interference from the excited states is negligible) and hence we can easily infer $E_0$ from the shape of $f*p$ in this region}.
    \item The filter function's Fourier transform $\hat{f}(t)$ also has an exponentially-decaying tail, i.e., $|\hat{f}(t)|=\exp{-\Omega(|t|)}$ for sufficiently large $t$. This allows $f$ to be well-approximated by a band-limited function, which means that the maximal evolution time in the Hadamard tests will be small. 
\end{enumerate}

{Based on this observation, a natural choice is the Gaussian filter, defined as:
\begin{align}
    f_\sigma(x)=\frac{1}{\sqrt{2\pi}\sigma} e^{-\frac{x^2}{2\sigma^2}},
\end{align}
where $\sigma>0$ is a parameter to be chosen later.
Note that its Fourier transform is another Gaussian kernel (up to some scaling factor):
\begin{align*}
    \hat{f}_\sigma(t)=e^{-\frac{1}{2}(\sigma \pi t)^2}.
\end{align*}
Thus, most of its mass is concentrated within $|t|=\mathcal{O}(\sigma^{-1})$. More importantly, by convolving $f_\sigma$ with the spectral measure $p$, we get:
\begin{align*}
    (f_\sigma \ast p)(x) = \sum_{j=0}^{N-1} p_j f_\sigma (x-E_j),
\end{align*}
which is a mixture of Gaussians. Since the Gaussian filter has an exponentially-decaying tail, if we zoom-in to a neighborhood of $E_0$, the convolution value is dominated by the first Gaussian kernel $p_0 f_\sigma(x-E_0)$. Therefore, the first significant \emph{peak} of $f_\sigma \ast p$ will be close to $E_0$ and GSEE is then reduced to a \emph{peak finding} problem. Our approach to this problem is to first obtain some $\tilde{E}_0\in [E_0-{\cal O}(\sigma), E_0+{\cal O}(\sigma)]$ by using a previous algorithm \cite{lin2022heisenberg}. Then we partition the interval $[\tilde{E}_0-{\cal O}(\sigma), \tilde{E}_0+{\cal O}(\sigma)]$ into a  ${\cal O}(\epsilon)$-width grid, and estimate the convolution $f_\sigma\ast p$ at each grid point. Finally, we output the position of the grid point with maximum convolution value. The complexity of this algorithm depends on $\sigma$, the width of the Gaussian filter, since we can only truncate its spectrum to $[-T, T]$ for $T=\tilde{\Theta}(1/\sigma)$ in order to evaluate the convolution with enough precision. For sufficiently small $\epsilon>0$, one can take $\sigma = {\cal O}(\Delta / \mathrm{polylog}(\Delta\epsilon^{-1}\eta^{-1}))$
such that the algorithm outputs an estimate of $E_0$ within $\epsilon$-additive error. It implies that the maximal Hamiltonian evolution time of our algorithm is $\tilde{\cal O}(1/\Delta)$. }

{In Appendix \ref{sec:general_eval_conv}, we develop a method for evaluating the convolution of the spectral measure and any filter with bounded band-limit. We prove that if $f_T$ is a function such that $\hat{f}_T$ has support in $[-T, T]$, then one can use the measurement outcomes of $\tilde{\cal O}(\epsilon_1^{-2}\|\hat{f}_T\|_1^2)$ Hadamard tests to estimate $(f_T \ast p)(x)$ within accuracy $\epsilon_1>0$ (for any given $x$), where  $\|\hat{f}_{T}\|_{1}$ is the $L^1$ norm of $\hat{f}_{T}$. }

{We now apply this result to bound the total evolution time of our algorithm with the Gaussian filter. As we will see, its performance will be sub-optimal, so we will only analyze a more refined version with better performance in detail. For the truncated Gaussian filter 
\begin{align*}
    f_{\sigma, T}(x):=\int_{-T}^T \hat{f}_\sigma(t) e^{2\pi \i x t}{\rm d}t,
\end{align*}
we have $\|\hat{f}_{\sigma,T}\|_1={\cal O}(\sigma^{-1})$, and we need to evaluate the convolution within accuracy $\epsilon_1=\tilde{\cal O}(\eta \epsilon^2 \sigma^{-3})$ \footnote{{To see this, note that the Taylor expansion for the Gaussian density around $0$ up to second order yields
\begin{align*}
 f_\sigma(\epsilon)\simeq \frac{1}{\sqrt{2\pi}\sigma}-\frac{\epsilon^2}{2\sqrt{2\pi}\sigma^3}.   
\end{align*}
This implies that the difference between the values of $(f_\sigma*p)(x)\approx p_0 f_\sigma(x-E_0)$ for $x=E_0$ and $x=E_0+\epsilon$ can be as small as $\mathcal{O}(\eta\epsilon^2\sigma^{-3})$.}}. Thus, by our choice of $\sigma=\tilde{\cal O}(\Delta)$, the sample complexity is  $\tilde{\cal O}(\eta^{-2}\epsilon^{-4} \sigma^6\cdot \sigma^{-2})=\tilde{\cal O}(\eta^{-2}\epsilon^{-4} \Delta^{4})$. Hence, 
the total evolution time is ${\cal T}_{\sf tot}\leq \tilde{\cal O}(\eta^{-2}\epsilon^{-4} \Delta^{4} \cdot T)=\tilde{\cal O}(\eta^{-2}\epsilon^{-4} \Delta^{3})$ as $T=\tilde{\cal O}(1/\sigma)=\tilde{\cal O}(1/\Delta)$.}

{The bottleneck of our total evolution time is the \emph{normalized} convolution evaluation accuracy ${\epsilon_1}/{\|\hat{f}_{T}\|_1}$ which scales as ${\cal O}(\eta \epsilon^2 \sigma^{-2})$ for the Gaussian filter $f_{\sigma}$. To improve this factor, we switch to the Gaussian derivative filter $g_\sigma$ which is defined as follows:
\begin{align*}
    g_\sigma (x):=-\frac{1}{\sqrt{2\pi} \sigma^3 } x e^{-\frac{x^2}{2\sigma^2}}.
    \label{eq:gauss_deriv}    
\end{align*}
Since the Gaussian derivative filter has an exponentially-decaying tail, $(g_\sigma * p)(x)$ resembles $p_0 g_\sigma(x-E_0)$ in a neighborhood of $E_0$. In particular, the unique zero point of $g_\sigma * p$ in this region is close to $E_0$.}

{The Gaussian derivative filter allows for a more favorable normalized convolution evaluation accuracy. On the one hand, the difference between the values of $|(g_\sigma\ast p)(x)|$ for $x$ that is $\epsilon/2$-close to $E_0$ and for $x$ that is $\epsilon$-far from $E_0$ is $\Omega(\eta \epsilon \sigma^{-3})$ (see Lemma \ref{lem:convolution_separation}). So it suffices to pick $\epsilon_1={\cal O}(\eta \epsilon \sigma^{-3})$. On the other hand, it is easy to show that $\|\hat{g}_{\sigma, T}\|_1 = \Theta(\sigma^{-2})$. This implies that the required normalized convolution evaluation accuracy for $g_\sigma$ is ${\epsilon_1}/{\|\hat{g}_{\sigma,T}\|_1}={\cal O}(\eta \epsilon \sigma^{-1})$. Moreover, our GSEE and convolution evaluation approaches are general so that they can be easily adapted to the Gaussian derivative filter function with almost the same parameters (i.e., $\sigma=\tilde{\cal O}(\Delta)$ and $T=\tilde{\cal O}(1/\sigma)$). Therefore, using $g_\sigma$ in our algorithm, the maximal evolution time remains to be ${\cal T}_{\sf max}=\tilde{\cal O}(\Delta^{-1})$ and the total evolution time is reduced to ${\cal T}_{\sf tot}=\tilde{\cal O}(\eta^{-2} \epsilon^{-2}\Delta)$. }

\paragraph{Ground state energy estimation algorithm}
Using the Gaussian derivative filter, we describe the algorithm for ground state energy estimation that proves Theorem \ref{thm:gsee_main_informal}.
This is the first GSEE algorithm that uses $\tilde{\cal O}(\Delta^{-1})$-depth quantum circuits to achieve accuracy $\epsilon$.
A detailed presentation of the algorithm can be found in Algorithm \ref{alg:gsee}. {This algorithm makes use of the data structure 
for convolution evaluation in Algorithm \ref{alg:eval_conv}.}

\begin{algorithm}[ht!]
\caption{Convolution evaluation data structure.}
\label{alg:eval_conv}
\begin{algorithmic}[1]
\State {\bf data structure} \textsc{FilterSampler}
\State \hspace{4mm} \textsc{Init}($f_T$)\Comment{Initialize for the filter $f_{T}$}
\State \hspace{4mm} \textsc{Sample}()\Comment{Sample $\xi\in \R$ with probability $\propto |\hat{f}_{T}(\xi)|$}
\State \hspace{4mm} \textsc{Norm}()\Comment{Return $\|\hat{f}_{T}\|_1$}
\State {\bf end data structure}
\State
\State {\bf data structure} \textsc{ConvEval}
\State {\bf members}
    \State \hspace{4mm} ${\cal C}({H, \rho}, t, W)$ \Comment{Run the circuit in Figure \ref{fig:hadamard_test} with $\tau=2\pi t$ and $W=I$ or $S^\dagger$}
    \State \hspace{4mm} $\{(t^{(i)}, z^{(i)})\}_{i\in [S]}\subset \R\times\C$\Comment{Fourier samples}
    \State \hspace{4mm} \textsc{FilterSampler} FS\Comment{Filter function's sampler}
\State {\bf end members}\\
\Procedure{Init}{{$H$, $\rho$}, $f_T$, $\epsilon$, $\delta$, $M$}~~~\Comment{$\epsilon$ is the target  accuracy, $\delta$ is the tolerable failure probability, $M$ is the maximal number of points at which the convolution is evaluated}
    \State FS.\textsc{Init}($f_T$)
    \State $L\gets \mathrm{FS}.\textsc{Norm}()$
    \State $S\gets \left\lceil \frac{L^2 \ln{4M/\delta}}{\epsilon^2} \right\rceil$
    \Comment{Lemma~\ref{lem:sampling_overhead}}
    \For{$i\gets 1,2,\ldots,S$}
        \State $t^{(i)}\gets \mathrm{FS}.\textsc{Sample}()$
        \State $x^{(i)}\gets {\cal C}({H, \rho}, t^{(i)}, I) $\Comment{Hadamard test}
        \State $y^{(i)}\gets {\cal C}({H, \rho}, t^{(i)}, S^\dagger) $\Comment{Hadamard test}        
        \State $z^{(i)}\gets L\cdot e^{2\pi\i \phi(t^{(i)})}(x^{(i)}+\i y^{(i)})$
    \EndFor
\EndProcedure\\
\Procedure{Eval}{$x$}\Comment{Approximate $(f_T\ast p)(x)$ within accuracy $\epsilon$}
    \State $\overline{Z}\gets \frac{1}{S}\sum_{i\in [S]}e^{2\pi\i t^{(i)} x}\cdot z^{(i)}$
    \State \Return $\overline{Z}$
\EndProcedure
\State {\bf end data structure}
\end{algorithmic}
\end{algorithm}

\begin{algorithm}[ht!]
\caption{Low-depth ground state energy estimation algorithm.}\label{alg:gsee}
\begin{algorithmic}[1]
\Procedure{{GSEE}}{{$H, \rho, \epsilon, \delta, \Delta, \eta$}}
    \State  $\sigma \gets \min\lrb{\frac{0.9 \Delta}{\sqrt{2 \ln{9 \Delta \epsilon^{-1} \eta^{-1}}}}, 0.2 \Delta}$ 
    \State {Run the algorithm in \cite{lin2022heisenberg} on $H$ and $\rho$ to obtain an estimate $\tilde{E}_0$ of $E_0$ such that $\tilde{E}_0$ is $\sigma/4$-close to $E_0$ with probability at least $1-\delta/2$} \label{ln:get_coarse_estimate}    
    \State $M\gets \lceil \sigma / \epsilon\rceil + 1$
    \State $\tilde{\epsilon} \gets \frac{0.1 \epsilon  \eta}{\sqrt{2\pi}\sigma^3}$
    \State $T\gets \pi^{-1} \sigma^{-1} \sqrt{2 \ln{8 \pi^{-1} \tilde{\epsilon}^{-1} \sigma^{-2}  } }$\Comment{Filter band-limit (Lemma~\ref{lem:gaussian_derivative_bandlimit})}
    \State \textsc{ConvEval.Init}({$H$, $\rho$}, $g_{\sigma, T}$, $\tilde{\epsilon}$/2, $\delta/2$, $M$)\Comment{Algorithm~\ref{alg:eval_conv}}\label{ln:call_conv_init}
    \For{$j=1,2,\dots,M$}\label{ln:forloop_gsee}
        \State $x_j\gets \wt{E}_0 - 0.25 \sigma +(0.5\sigma/M)\cdot (j-1)$\label{ln:def_x_j}
        \State $h_j\gets \textsc{ConvEval.Eval}(x_j)$\Comment{Algorithm~\ref{alg:eval_conv}}
    \EndFor
    \State $j^*\gets \arg\min_{1\le j\le M} |h_j|$. 
    \State \Return $x_{j^*}$
\EndProcedure
\end{algorithmic}
\label{alg:low_depth_gsee}
\end{algorithm}

{The inputs to Algorithm \ref{alg:low_depth_gsee} include the Hamiltonian $H$ and initial state $\rho$, a lower bound $\Delta$ on the spectral gap $E_1-E_0$ of $H$, a lower bound $\eta$ on the overlap between $\rho$ and the ground state $\ket{E_0}$ of $H$ (i.e. $\eta \le \bra{E_0} \rho \ket{E_0}$),
the target accuracy $\epsilon$ and confidence $1-\delta$. The output of the algorithm is an estimate of the ground state energy $E_0$ of $H$.}
The steps of the algorithm are as follows:
\vspace{0.25cm}

\begin{enumerate}
    \item {\textbf{Roughly locate $E_0$:} Run the algorithm of \cite{lin2022heisenberg} to obtain an estimate $\tilde{E}_0$ of $E_0$ such that $|\tilde{E}_0-E_0|=\wt{O}(\Delta)$ with high probability.}
    \item \textbf{Configure filter:} Set the Gaussian derivative filter function to have width $\sigma \in \wt{O}(\Delta)$ and choose the truncation of the filter to be $T\in \wt{O}(1/\sigma)=\wt{O}(1/\Delta)$ (this limits the circuit depth).
    \item \textbf{Estimate convolution:} 
    For a grid of $M \in \wt{O}(\Delta/\epsilon)$ evenly-spaced energies centered at $\wt{E}_0$ with width $\wt{O}(\Delta)$, estimate the Gaussian derivative convolution at each grid point.
    \begin{enumerate}
    \item To estimate the Gaussian derivative convolution at each point, for $\wt{O}(\eta^{-2}\epsilon^{-2}\Delta^2)$ rounds, draw a time $t\in[-T, T]$ with probability proportional to $|\hat{f}_T(t)|$ and run depth $t$ real and imaginary Hadamard tests to generate binary samples $X$ and $Y$. 
    \item Compute $Z(x;t,X,Y)$ (see Eq. \ref{eq:def_z}) for each sample and average to output the convolution estimate at $x$.
    \label{step:conv_est}
    \end{enumerate}
    \item \textbf{Estimate zero-crossing:} Among the $M$ convolution estimates, find the estimate closest to zero and report the corresponding energy as the ground state energy estimate.
\end{enumerate}
To realize the results in Corollary \ref{cor:interpolation_informal}, we choose $\Delta$ between $\Delta_{\textup{true}}$ and $1/\epsilon$.

{Furthermore, we can easily parallelize our algorithm to reduce the runtime in the regime where $\Delta\gg \epsilon$. Indeed, as we do a Monte Carlo evaluation of the convolution, we can use several quantum computers in parallel to generate samples and reduce the runtime of the computation. In contrast, in the regime $\Delta\simeq \epsilon$ it is unclear how to speed up the computation by resorting to parallel quantum computers, as we essentially sample from the distribution a constant number of times. Thus, as the total number of gates required to implement the algorithm in the $\Delta\gg \epsilon$ regime is smaller than the usual QPE, we envision that it will be possible to run the algorithm reliably in error-corrected devices with a smaller code distance, which translates to a smaller number of physical qubits. And by running the algorithm on various smaller quantum processors in parallel, it will be possible to offset some of the additional cost incurred by the quadratically worse dependency on the precision. By combining these two observations, we believe that it will be possible to obtain quantum advantage earlier with the approach advocated by this paper when compared to traditional QPE.}

 {One may wonder if Algorithm \ref{alg:low_depth_gsee} still works if $\Delta$ is larger than the spectral gap $E_1-E_0$. Unfortunately, this is not always the case. To see this, recall that we infer $E_0$ from the shape of the convolution of the spectral measure $p$ and a Gaussian derivative filter $g_\sigma$. We need the filter $g_\sigma$ to be narrow enough so that $(p*g_\sigma)(x) \approx p_0 g_\sigma(x-E_0)$ in a neighborhood of $E_0$. Then the unique zero point of $p*g_\sigma$ in this region is close to $E_0$. If $\Delta > E_1-E_0$, then since $\sigma=\tilde{\cal O}(\Delta)$, the filter $g_\sigma$ could be too wide for our purpose, and the functions $p_0 g_\sigma(x-E_0)$ and $p_1 g_\sigma(x-E_1)$ might have significant overlap. Consequently, we can no longer guarantee that $(p*g_\sigma)(x) \approx p_0 g_\sigma(x-E_0)$ for $x$ near $E_0$. In other words, $p*g_\sigma$ might have a different shape from $g_\sigma$ in the neighborhood of $E_0$. Thus, the current strategy  will not work in general. In practice, it may be difficult to decide whether a given $\Delta$ is smaller or larger than $E_1-E_0$, and this could pose a problem for the application of Algorithm \ref{alg:low_depth_gsee}. In a recent work \cite{wang2023faster}, we proposed a method to certify the correctness of the outputs of GSEE algorithms (without assuming $\Delta\le E_1-E_0$) to mitigate this problem. We leave it as future work to fully resolve this issue.}

\section{Future directions}

In this work, we have introduced a framework for estimating ground state energies using tunable-depth quantum circuits.
As shown in Figure \ref{fig:intro}, the algorithms developed in this work are applicable to maximum circuit depths ranging from $\wt{\cal O}(1/\Delta)$ to $\wt{\cal O}(1/\epsilon)$.
The development of algorithms beyond this range is left for future exploration.
While we have made progress in establishing upper bounds over a range of circuit depths, an important future direction is to establish lower bounds on depth-limited ground state energy estimation.
This would deepen our understanding of the capabilities and constraints of using depth-limited quantum computers for simulating quantum systems. {To this end, one might draw inspiration from Section 2.2 of \cite{dutkiewicz2022heisenberg} which derives Cramer-Rao lower bounds on the scaling of the error versus the total quantum cost for estimating a single eigenvalue phase given an eigenstate input. It might be possible to generalize the result to the case where the input state is just close to an eigenstate and a spectral gap of the Hamiltonian is promised.}

Our work helps to establish the paradigm of developing quantum algorithms using the tools of classical signal processing \cite{lin2022heisenberg, zhang2021computing, wang_boosters}. 
In this approach, the quantum computer produces stochastic signals that encode characteristics of a matrix.
This stochastic signal can be processed to learn the matrix properties of interest.
The signal processing paradigm is well-suited to
developing algorithms for early fault-tolerant quantum computers.
Quantum computations with such architectures will be error prone, generating noisy signals.
The tools of classical signal processing have been designed to handle such noisy signals and can aid in the design and analysis of robust quantum algorithms \cite{wang2021minimizing, katabarwa2021reducing, kshirsagar2022proving}.

One requirement of the algorithm is that a lower bound on the energy gap must be specified.
There exist quantum chemistry methods for estimating the gap
(e.g. using the ORCA software \cite{neese2012orca,neese2018software} as we did for our numerical comparisons).
However, such estimates can become inaccurate for large systems.
It may be helpful to incorporate a step into the quantum algorithm that estimates this gap.
Although this estimation is computationally hard in general \cite{ambainis2014physical}, many physical systems of interest have structure that make the estimation feasible.

In this work, we did not consider the impact of implementation error on the performance of the algorithm.
We expect that our algorithm is able to tolerate some degree of variation between the ideal Hadamard tests and the implemented Hadamard tests.
Building off of recent work \cite{kshirsagar2022proving}, we believe the algorithm can be operated so as to accommodate such deviations.
We leave for future work the investigation of robust quantum algorithms for ground state energy estimation.

The methods introduced here may help to bring the target of useful quantum computing closer to the present.
Yet, there is still much work needed to carry out detailed resource estimations that predict the onset of quantum advantage using methods such as those we have introduced.
More broadly, our hope is that this work contributes to the general understanding of how to use quantum computers given practical constraints on their capabilities and might inspire the development of quantum algorithms in other application domains.

{One may have noticed that GSEE can be also solved by preparing a high-fidelity approximation of the ground state and estimating its energy with respect to the Hamiltonian somehow. However, it is unclear whether such methods can have the same circuit depth and quantum runtime as ours. Specifically, our method requires only $\tilde{\mathcal O}(\Delta^2/\epsilon^2)$ quantum circuits where each circuit evolves $H$ for $\tilde{\mathcal O}(1/\Delta)$ time (ignoring $\eta$-dependence). Note that $\Delta \le E_1-E_0$ is always no larger than $\norm{H}$. We will consider two state-preparation-based strategies below, and both of them require $\Omega(\norm{H}^2/\epsilon^2)$ copies of the state to reach accuracy $\epsilon$, which is more costly than our method, especially when $\norm{H} \gg \Delta$. In both strategies, let $\ket{\psi}$ be the output of a ground state preparation algorithm.
\begin{enumerate}
    \item In the first strategy, suppose $H$ has the Pauli decomposition $H=\sum_i a_i P_i$. Then with probability $|a_i|/\norm{\vec{a}}_1$, where $\norm{\vec{a}}_1=\sum_i |a_i| \ge \norm{H}$, we perform the projective measurement $\{(I+P_i)/2, (I-P_i)/2\}$ on $\ket{\psi}$, and multiply the outcome $\pm 1$ by $a_i/|a_i|$. Then this random variable is an unbiased estimator of $\bra{\psi}H\ket{\psi}/\norm{\vec{a}}_1$. Thus we need to collect ${\mathcal O}(\norm{\vec{a}}_1^2 /\epsilon^2)$ samples to estimate $\bra{\psi}H\ket{\psi}$ within accuracy $\epsilon$.
    \item In the second strategy, let $U$ be a block-encoding of $H$ with normalization factor $\alpha$, i.e. $\bra{0}U\ket{0} = H/\alpha$. Note that $\alpha \ge \norm{H}$. Then we run the Hadamard test with unitary operation $U$ and initial state $\ket{0}\ket{\psi}$. Then the outcome of this circuit is an unbiased estimator of $\bra{0}\bra{\psi}U\ket{0}\ket{\psi}=\bra{\psi}H\ket{\psi}/\alpha$. Hence we need to collect ${\mathcal O}(\alpha^2/\epsilon^2)$ samples to estimate $\bra{\psi}H\ket{\psi}$ within accuracy $\epsilon$.
\end{enumerate}
We leave it as an open question to find a state-preparation-based strategy with $\tilde{\mathcal O}(\Delta^2 /\epsilon^2)$ sample complexity.}

{In this work, we have focused on estimating a single eigenvalue of the Hamiltonian $H$. But it is likely that our method can be extended to estimate multiple eigenvalues of $H$ simultaneously, under appropriate assumptions about the gaps among the eigenvalues and the overlaps between an input state and the eigenstates. The reason is as follows. One can use the data from the Hadamard tests to construct the convolution of the spectral measure and a Gaussian filter, and under proper conditions, the peaks of this convolution will be close to the eigenvalues of $H$. Then our problem is reduced to finding the peaks of this function. Furthermore, perhaps we can replace the Gaussian filter by a Gaussian derivative filter to gain better efficiency, as in GSEE. In that case, one would need to search for the zero points of the convolution instead of its peaks. We leave it as future work to fully develop this algorithm for simultaneous estimation of multiple eigenvalues of a given Hamiltonian. }

\vspace{0.5cm}

\noindent\textbf{Acknowledgements} This work was done while R.Z. and S.Z. were research interns at Zapata Computing Inc. We especially thank Alex Kunitsa for carrying out the gap estimations for the EC and PF$_6^-$ molecules and providing quantum chemistry expertise on the topic and we thank Mario Berta for the suggestion to include such estimations in the manuscript. We thank Chong Sun for developing the software used to generate Figure \ref{fig:GSEE_circuits}. We also thank Yudong Cao, Phillip Jensen, Artur Izmaylov, Yu Tong, and Katerina Gratsea for feedback on the manuscript. Finally, we thank Aram Harrow and Andrew Childs for discussions and suggestions on future directions for this line of research.

\appendix

\section{Estimating ground state energy via Gaussian derivative filtering}

In this appendix, we propose a strategy for GSEE based on Gaussian derivative filtering. In Appendix~\ref{sec:filter_prop}, we define the Gaussian derivative function and prove a nice property of the convolution between this filter and the spectral measure $p$. In Appendix~\ref{sec:gsee_algorithm}, we show how this property leads to a strategy for GSEE. In Appendix~\ref{sec:filter_truncation}, we prove that the Gaussian derivative function can be approximated by a band-limited function, which is crucial for efficient evaluation of the convolution. 

\subsection{Convolving the spectral measure with a Gaussian derivative filter}
\label{sec:filter_prop}
Let us start by defining the Gaussian derivative function and demonstrating its properties. Specifically, let $\sigma>0$ be arbitrary, and let $f_{\sigma}(x)=\frac{1}{\sqrt{2\pi} \sigma} e^{-\frac{x^2}{2\sigma^2}}$ be a Gaussian function. The Fourier transform of $f_\sigma$ is
\begin{align}
    \hat{f}_\sigma(\xi) = e^{-\frac{1}{2} (\sigma \pi \xi)^2}.
\end{align}
Now consider the derivative of $f_\sigma$, i.e.,
 \begin{align}\label{eqn:g_sigma}
     g_\sigma(x) \defeq f'_\sigma(x) = -\frac{1}{\sqrt{2\pi} \sigma^3 } x e^{-\frac{x^2}{2\sigma^2}}.
 \end{align}
Then the Fourier transform of $g_\sigma$ is
\begin{align}
    \hat{g}_\sigma(\xi) = 2\pi \i \xi \hat{f}_\sigma(\xi)
    = 2 \pi \i \xi e^{-\frac{1}{2} (\sigma \pi \xi )^2}.
\end{align}

The following properties of $g_\sigma$ and $\hat{g}_\sigma$ will be useful:
\begin{fact}[Properties of the Gaussian derivative function]\label{fact:guassian_derivative}
~
\begin{enumerate}
    \item $g_\sigma(0)=0$.
    \item $|g_\sigma(x)|$ is even, increases monotonically in $(-\infty, -\sigma] \cup [0, \sigma]$, and decreases monotonically in $[-\sigma, 0] \cup [\sigma, \infty)$.       
    \label{monotonicity_of_filter}
    \item $g_\sigma(x)$ decays exponentially to $0$ as $x \to \pm\infty$.
    \item $\hat{g}_\sigma(\xi)$ decays exponentially to $0$ as $\xi \to \pm\infty$.
\end{enumerate}

\end{fact}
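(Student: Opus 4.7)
The plan is to verify each of the four items by direct computation, since they all follow from elementary properties of the Gaussian and its derivative together with the formulas for $g_\sigma$ and $\hat{g}_\sigma$ already displayed in the excerpt. First I would note that $g_\sigma(0)=0$ is immediate from the formula $g_\sigma(x)=-\frac{1}{\sqrt{2\pi}\sigma^3}xe^{-x^2/(2\sigma^2)}$, which has $x$ as an explicit factor. The exponential decay in item 3 follows because $|g_\sigma(x)|=\frac{|x|}{\sqrt{2\pi}\sigma^3}e^{-x^2/(2\sigma^2)}$, and the Gaussian factor dominates the polynomial prefactor; concretely, for any $c<\frac{1}{2\sigma^2}$ one can absorb $|x|$ into $e^{-(\frac{1}{2\sigma^2}-c)x^2}$ for $|x|$ large enough and conclude $|g_\sigma(x)|\le C e^{-c|x|}$.

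For item 2, I would first establish evenness by observing that $g_\sigma(-x)=-g_\sigma(x)$, which gives $|g_\sigma(-x)|=|g_\sigma(x)|$. Monotonicity then reduces, up to a positive constant, to analyzing $h(x):=xe^{-x^2/(2\sigma^2)}$ on $[0,\infty)$. A single derivative yields
\begin{equation}
h'(x)=e^{-x^2/(2\sigma^2)}\left(1-\tfrac{x^2}{\sigma^2}\right),
\end{equation}
which is positive on $[0,\sigma)$ and negative on $(\sigma,\infty)$. Hence $|g_\sigma|$ is increasing on $[0,\sigma]$ and decreasing on $[\sigma,\infty)$, and evenness transports this behavior to the negative axis, giving exactly the four monotonicity intervals claimed.

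For item 4, I would just read off $|\hat{g}_\sigma(\xi)|=2\pi|\xi|\,e^{-(\sigma\pi\xi)^2/2}$ from the stated Fourier transform. The Gaussian factor in $\xi$ decays super-exponentially, so for any $c>0$ one has $|\hat{g}_\sigma(\xi)|\le C' e^{-c|\xi|}$ for all $|\xi|$ beyond some threshold depending on $c$ and $\sigma$, which is even stronger than the claimed exponential decay.

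There is no genuine obstacle here; the entire fact is routine calculus once the closed-form expressions for $g_\sigma$ and $\hat{g}_\sigma$ from the previous paragraph of the excerpt are in hand. The only minor care needed is in items 3 and 4, where one should clarify that although the literal decay is Gaussian rather than exponential, any desired exponential rate is achievable by absorbing the polynomial prefactor into a slightly weakened Gaussian bound, which is what the fact asserts.
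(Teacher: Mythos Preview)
Your proposal is correct and handles each item by the natural direct computation. The paper itself does not supply a proof of this fact at all---it is stated without justification---so your argument is not merely equivalent to the paper's but actually fills in what the paper leaves implicit.
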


Now let us consider the convolution between the filter $g_\sigma$ and the spectral measure $p$: 
\begin{align}
    (g_\sigma * p) (x) = \sum_{j=0}^{N-1} p_j g_\sigma(x-E_j)
    = -\frac{1}{\sqrt{2\pi} \sigma^3 } \sum_{j=0}^{N-1} p_j 
     (x-E_j) e^{-\frac{(x-E_j)^2}{2\sigma^2}}. 
     \label{eqn:convolution}
\end{align}
It turns out that if $\sigma$ is appropriately chosen, then $|(g_\sigma * p)(x)|$ is small only if $x$ is close to $E_0$, assuming $x$ is at most $O(\sigma)$-away from $E_0$:

\begin{lemma}\label{lem:conditions_gaussian_filter}
Let $c=\sqrt{2\ln{10/9}}\approx 0.45904$, and let $\Delta$ and $\eta$ be as in the problem formulation in the main text. Suppose $\epsilon>0$ is small enough such that $\epsilon \le c \cdot \min \left(\frac{0.9  \Delta}{\sqrt{2\ln{9 \Delta \epsilon^{-1} \eta^{-1}}}}, 0.2  \Delta \right)$. Then for
\begin{align}
    \sigma \defeq \min\lrb{\frac{0.9 \Delta}{\sqrt{2 \ln{9 \Delta \epsilon^{-1} \eta^{-1}}}}, 0.2 \Delta}, 
    \label{eq:def_sigma}
\end{align} 
we have 
\begin{itemize}
    \item $|(g_\sigma * p)(x)| < \dfrac{0.6 \epsilon p_0}{\sqrt{2\pi} \sigma^3} $, $\forall x \in [E_0-0.5\epsilon, E_0+0.5\epsilon]$.
    \item $|(g_\sigma * p)(x)| > \dfrac{0.8 \epsilon p_0}{\sqrt{2\pi} \sigma^3} $, $\forall x \in [E_0-0.5\sigma, E_0-\epsilon) \cup (E_0+\epsilon, E_0+0.5\sigma]$.
\end{itemize}
\label{lem:convolution_separation}
\end{lemma}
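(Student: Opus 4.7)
The plan is to split the convolution from \eqref{eqn:convolution} into the ground-state contribution and the excited-state tail,
\begin{align*}
    (g_\sigma*p)(x) = A_0(x) + R(x), \qquad A_0(x) = -\frac{p_0(x-E_0)}{\sqrt{2\pi}\,\sigma^3}\,e^{-(x-E_0)^2/(2\sigma^2)}, \qquad R(x) = -\frac{1}{\sqrt{2\pi}\,\sigma^3}\sum_{j\ge 1} p_j(x-E_j)\,e^{-(x-E_j)^2/(2\sigma^2)},
\end{align*}
and then show (i) $|R(x)| \le \tfrac{0.1\,\epsilon\,p_0}{\sqrt{2\pi}\,\sigma^3}$ uniformly for $x \in [E_0-0.5\sigma, E_0+0.5\sigma]$, and (ii) tight pointwise bounds on $|A_0(x)|$. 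The two claims of the lemma then follow from the triangle inequality $\bigl||A_0|\pm|R|\bigr| \le |(g_\sigma*p)(x)| \le |A_0|+|R|$.

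For the tail bound, the key observation is that for $j \ge 1$ and $x \in [E_0-0.5\sigma, E_0+0.5\sigma]$ we have $|x-E_j| \ge E_j-E_0 - 0.5\sigma \ge \Delta - 0.5\sigma \ge 0.9\Delta$, using the assumption $\sigma \le 0.2\Delta$ from \eqref{eq:def_sigma}. The univariate function $u \mapsto u\,e^{-u^2/(2\sigma^2)}$ is decreasing on $[\sigma,\infty)$, and $0.9\Delta \ge 4.5\,\sigma > \sigma$, so each summand in $R(x)$ is bounded (in absolute value) by its value at $u = 0.9\Delta$. Using $\sum_{j\ge 1} p_j \le 1$, this gives $|R(x)| \le \tfrac{0.9\Delta}{\sqrt{2\pi}\,\sigma^3}\,e^{-(0.9\Delta)^2/(2\sigma^2)}$. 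Plugging in the other branch of the $\sigma$ choice, $\sigma \le 0.9\Delta/\sqrt{2\ln(9\Delta\epsilon^{-1}\eta^{-1})}$, yields $e^{-(0.9\Delta)^2/(2\sigma^2)} \le \epsilon\eta/(9\Delta)$, hence $|R(x)| \le \tfrac{0.1\,\epsilon\,\eta}{\sqrt{2\pi}\,\sigma^3} \le \tfrac{0.1\,\epsilon\,p_0}{\sqrt{2\pi}\,\sigma^3}$ since $p_0 \ge \eta$.

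For the ground-state term, for the first statement ($|x-E_0|\le 0.5\epsilon$) I use $e^{-(x-E_0)^2/(2\sigma^2)} \le 1$ to get $|A_0(x)| \le \tfrac{0.5\,\epsilon\,p_0}{\sqrt{2\pi}\,\sigma^3}$; adding $|R(x)|$ gives the desired $\tfrac{0.6\,\epsilon\,p_0}{\sqrt{2\pi}\,\sigma^3}$. For the second statement ($\epsilon \le |x-E_0| \le 0.5\sigma$), I use monotonicity of $u\,e^{-u^2/(2\sigma^2)}$ on $[0,\sigma]$ (Fact~\ref{fact:guassian_derivative}, Item~\ref{monotonicity_of_filter}) to lower-bound $|A_0(x)|$ by its value at $u=\epsilon$, obtaining $|A_0(x)| \ge \tfrac{p_0\,\epsilon\,e^{-\epsilon^2/(2\sigma^2)}}{\sqrt{2\pi}\,\sigma^3}$. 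The hypothesis $\epsilon \le c\sigma$ with $c=\sqrt{2\ln(10/9)}$ then gives $e^{-\epsilon^2/(2\sigma^2)} \ge e^{-c^2/2} = 9/10$, so $|A_0(x)| \ge \tfrac{0.9\,\epsilon\,p_0}{\sqrt{2\pi}\,\sigma^3}$. Subtracting $|R(x)|$ produces the $\tfrac{0.8\,\epsilon\,p_0}{\sqrt{2\pi}\,\sigma^3}$ lower bound.

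The main obstacle is arithmetic bookkeeping: the two-sided definition of $\sigma$ in \eqref{eq:def_sigma} needs both branches simultaneously — the $0.2\Delta$ branch is what makes $0.9\Delta \ge \sigma$ so that the decreasing regime of $u\,e^{-u^2/(2\sigma^2)}$ applies to the tail, while the $0.9\Delta/\sqrt{2\ln(9\Delta\epsilon^{-1}\eta^{-1})}$ branch is what makes the Gaussian factor at $0.9\Delta$ small enough to beat the $\epsilon\eta$-scale target. One also has to verify that $\epsilon \le c\sigma$ is truly required (and not just $\epsilon \le \sigma$) in order to land at exactly $e^{-c^2/2}=9/10$, producing the clean $0.9/0.1$ split that yields the advertised constants $0.6$ and $0.8$. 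Everything else — splitting the sum, applying the triangle inequality, and bounding each term by its extremal value in $u$ — is routine.
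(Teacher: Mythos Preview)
Your proposal is correct and follows essentially the same approach as the paper's proof: split off the ground-state term, bound the excited-state tail uniformly by pushing each $|x-E_j|$ past $0.9\Delta$ and using the two branches of the definition of $\sigma$, then handle $|A_0(x)|$ via monotonicity of $u\,e^{-u^2/(2\sigma^2)}$ together with $\epsilon\le c\sigma$. The only cosmetic differences are that the paper bounds the tail separately for the two $x$-ranges (using $\Delta-0.5\epsilon$ in Part~I and $\Delta-0.5\sigma$ in Part~II) whereas you do it once on the larger interval, and the paper tracks the strict inequalities a bit more carefully to match the ``$<$'' and ``$>$'' in the statement.
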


\begin{proof}
Note that our choice of $\sigma$ and the condition on $\epsilon$ imply that $\epsilon \le c \sigma < 0.5\sigma$. As a consequence, we do have $E_0-0.5\sigma < E_0-\epsilon$ and $E_0+\epsilon < E_0+0.5\sigma$. Thus, the interval in the second bullet is well-defined.  Moreover, we have
\begin{align}
\abs{g_\sigma(0.9 \Delta)}
    &= \frac{1}{\sqrt{2\pi} \sigma^3} 0.9 \Delta e^{-\frac{0.81\Delta^2}{2 \sigma^2}} \nonumber \\
&\le \frac{1}{\sqrt{2\pi} \sigma^3} 0.1 \epsilon \eta \tag{by the property $\sigma\leq \frac{0.9 \Delta}{\sqrt{2 \ln{9 \Delta \epsilon^{-1} \eta^{-1}}}}$ in Eq.~\eqref{eq:def_sigma}} \\ 
& \le \frac{1}{\sqrt{2\pi} \sigma^3} 0.1 \epsilon p_0,
\label{eq:bound_tail_impact}
\end{align}
where the last step follows from $p_0\geq \eta$.

We prove the first and the second parts of the lemma below.

\paragraph{Part I.} For any $x \in [E_0-0.5 \epsilon, E_0+0.5 \epsilon]$, we have
 \begin{align}
    \abs{(g_\sigma*p)(x)} &= \abs{p_0 g_\sigma(x-E_0) + \sum_{j=1}^{N-1} p_j g_\sigma(x-E_j)} \tag{by Eq.~\eqref{eqn:convolution}} \nonumber \\
    & \le p_0 \abs{g_\sigma(x-E_0)} + \sum_{j=1}^{N-1} p_j \abs{g_\sigma(x-E_j)} \nonumber \\ 
& \le p_0 \abs{g_\sigma(x-E_0)} + \max_{1 \le j \le N-1}\abs{g_\sigma(x-E_j)}.\label{eqn:convolution_upperbound}
\end{align}

The first term in Eq.~\eqref{eqn:convolution_upperbound} can be bounded as follows: 
\begin{align}
\abs{g_\sigma(x-E_0)} &\le \abs{g_\sigma(0.5 \epsilon)} \tag{by $|x-E_0|\leq 0.5\epsilon<\sigma$ and Property \ref{monotonicity_of_filter} in Fact~\ref{fact:guassian_derivative}} \\
&= \frac{1}{\sqrt{2\pi} \sigma^3} 0.5 \epsilon e^{-\frac{0.25\epsilon^2}{2 \sigma^2}} \tag{by Eq.~\eqref{eqn:g_sigma}} \\
&\le \frac{1}{\sqrt{2\pi} \sigma^3} 0.5 \epsilon.\label{eq:part1_1}
\end{align} 

To upper bound the second term in Eq.~\eqref{eqn:convolution_upperbound}, first note that for each $j \ge 1$, 
\begin{align}
|x-E_j| &\ge E_j-E_0-0.5\epsilon \tag{since $x\in[E_0-0.5\epsilon, E_0+0.5\epsilon]$} \\
&\ge \Delta - 0.5\epsilon \tag{since $E_j-E_0 \ge E_1-E_0 \ge \Delta$}\\
&> 0.9 \Delta  \tag{by the assumption $\epsilon \le 0.2 c \Delta < 0.1 \Delta$}\\
&> \sigma,\label{eqn:x-E_i}
\end{align}
where the last step follows from the property $\sigma \le 0.2 \Delta$ in Eq.~\eqref{eq:def_sigma}. Then we obtain 
\begin{align}
    \abs{g_\sigma(x-E_j)} &< \abs{g_\sigma(0.9 \Delta)}\tag{by Eq.~\eqref{eqn:x-E_i} and Property \ref{monotonicity_of_filter} in Fact~\ref{fact:guassian_derivative}}\\
&\le \frac{1}{\sqrt{2\pi} \sigma^3} 0.1 \epsilon p_0, \label{eq:part1_2}
\end{align}
where the second step follows from Eq.~\eqref{eq:bound_tail_impact}. 

Combining Eqs.~\eqref{eqn:convolution_upperbound}, \eqref{eq:part1_1}, and \eqref{eq:part1_2}, we get that for $x \in [E_0-0.5 \epsilon, E_0+0.5 \epsilon]$,
\begin{align}
    \abs{(g_\sigma*p)(x)} &< p_0 \cdot \frac{1}{\sqrt{2\pi} \sigma^3} 0.5 \epsilon
    + \frac{1}{\sqrt{2\pi} \sigma^3} 0.1 \epsilon p_0
    = \frac{0.6 \epsilon p_0}{\sqrt{2\pi} \sigma^3}.
\end{align}

\paragraph{Part II.} For any $x \in [E_0-0.5\sigma, E_0-\epsilon) \cup (E_0+\epsilon, E_0+0.5\sigma]$, we have 
\begin{align}
    \abs{(g_\sigma*p)(x)} &= \abs{p_0 g_\sigma(x-E_0) + \sum_{j=1}^{N-1} p_j g_\sigma(x-E_j)} \tag{by Eq.~\eqref{eqn:convolution}} \\
    & \ge p_0 \abs{g_\sigma(x-E_0)} - \sum_{j=1}^{N-1} p_j \abs{g_\sigma(x-E_j)} \nonumber \\ 
    & \ge p_0 \abs{g_\sigma(x-E_0)} - \max_{1 \le j \le N-1} \abs{g_\sigma(x-E_j)}.\label{eqn:convolution_lower_bound}
\end{align}

The first term in Eq.~\eqref{eqn:convolution_lower_bound} can be lower bounded as follows:
\begin{align}
   \abs{g_\sigma(x-E_0)} &> \abs{g_\sigma(\epsilon)} \tag{by $\epsilon < |x-E_0| \le 0.5 \sigma$ and Property \ref{monotonicity_of_filter} in Fact~\ref{fact:guassian_derivative}}\\
&= \frac{1}{\sqrt{2\pi} \sigma^3} \epsilon e^{-\frac{\epsilon^2}{2 \sigma^2}} \tag{by Eq.~\eqref{eqn:g_sigma}}\\
&\geq \frac{1}{\sqrt{2\pi} \sigma^3} \epsilon e^{-\frac{c^2}{2}} \tag{by the assumption $\epsilon \le c\sigma$}\\
&\ge \frac{1}{\sqrt{2\pi} \sigma^3} 0.9 \epsilon,\label{eq:part2_1}
\end{align}
where the last step follows from $c=\sqrt{2\ln{10/9}}$. 

To upper bound the second term in Eq.~\eqref{eqn:convolution_lower_bound}, note that for each $j \ge 1$, 
\begin{align}
|x-E_j| &\ge E_j-E_0-0.5\sigma\tag{since $x \in [E_0-0.5\sigma, E_0-\epsilon) \cup (E_0+\epsilon, E_0+0.5\sigma]$}\\ 
&\ge \Delta - 0.5\sigma \tag{since $E_j-E_0 \ge E_1-E_0 \ge \Delta$}\\
&\ge 0.9 \Delta > \sigma, \label{eqn:x-E_j_lower_bound}
\end{align}
where the last two inequalities follow from the property $\sigma \le 0.2 \Delta$ in  Eq.~\eqref{eq:def_sigma}. Then we obtain
\begin{align}
\abs{g_\sigma(x-E_j)} &\le \abs{g_\sigma(0.9 \Delta)}\tag{by Eq.~\eqref{eqn:x-E_j_lower_bound} and Property \ref{monotonicity_of_filter} in Fact~\ref{fact:guassian_derivative}}\\
&\le \frac{1}{\sqrt{2\pi} \sigma^3} 0.1 \epsilon p_0,  \label{eq:part2_2}  
\end{align}
where the last step follows from Eq.~\eqref{eq:bound_tail_impact}. 

Combining Eqs.~\eqref{eqn:convolution_lower_bound}, \eqref{eq:part2_1}, and \eqref{eq:part2_2}, we get that for $x \in [E_0-0.5\sigma, E_0-\epsilon) \cup (E_0+\epsilon, E_0+0.5\sigma]$,
\begin{align}
    \abs{(g_\sigma*p)(x)} > p_0 \cdot \frac{1}{\sqrt{2\pi} \sigma^3} 0.9 \epsilon
    - \frac{1}{\sqrt{2\pi} \sigma^3} 0.1 \epsilon p_0 =
    \frac{0.8 \epsilon p_0}{\sqrt{2\pi} \sigma^3}.
\end{align}

The lemma is thus proved.
\end{proof}

\subsection{Basic strategy for ground state energy estimation}\label{sec:gsee_algorithm}
Lemma \ref{lem:convolution_separation} prompts us to develop the following strategy for estimating ground state energy. We first obtain an estimate $\tilde{E}_0$ of $E_0$ such that $\tilde{E}_0$ is $O(\sigma)$-close to $E_0$ with high probability. Then we find a point at which $|(g_\sigma * p)|$ has small value in a region $[\tilde{E}_0-O(\sigma), \tilde{E}_0+O(\sigma)]$. Using Lemma~\ref{lem:convolution_separation} we can prove that this point is $\epsilon$-close to $E_0$ with high probability. Formally, we have

\begin{lemma}
\label{lem:gsee_correctnesss}
Let $\Delta, \eta, \epsilon$ and $\delta$ be as in the problem formulation in the main text. Suppose $\epsilon$ satisfies the condition in Lemma \ref{lem:convolution_separation}. Let $\sigma$ be defined as Eq.~\eqref{eq:def_sigma}. Suppose $\tilde{E}_0$ is a random variable such that 
\begin{align}
\P{|\tilde{E}_0-E_0|>\frac{\sigma}{4}} < \frac{\delta}{2}.    
\label{eq:cond_wte0}
\end{align}
Let $M \defeq \lceil \sigma / \epsilon\rceil + 1$, and let $x_j \defeq \tilde{E}_0 - 0.25 \sigma +(0.5\sigma/M)\cdot (j-1)$ for $j \in [M]$. Suppose $h_1, h_2, \dots, h_M$ are random variables such that  
\begin{align}
\P{\forall j\in[M]:|(g_\sigma*p)(x_j) - h_j|\le \frac{0.1 \epsilon \eta}{\sqrt{2\pi}\sigma^3}} \ge 1-\frac{\delta}{2}.    
\label{eq:cond_hjs}
\end{align}
Let $j^*=\arg\min_{1\le j\le M} |h_j|$. Then we have
\begin{align}
    \P{|x_{j^*} - E_0| > \epsilon} < \delta.
\end{align}
\end{lemma}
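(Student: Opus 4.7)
The plan is to condition on the two high-probability events in the hypotheses and derive the conclusion by comparing convolution values at a ``good'' grid point close to $E_0$ versus at a hypothetical bad output point far from $E_0$, using Lemma~\ref{lem:convolution_separation} in both directions. By a union bound, the event $\mathcal{E}$ defined by $|\tilde{E}_0-E_0|\le \sigma/4$ together with $|(g_\sigma*p)(x_j)-h_j|\le \tfrac{0.1\epsilon\eta}{\sqrt{2\pi}\sigma^3}$ for all $j\in[M]$ occurs with probability at least $1-\delta$. It suffices to show that on $\mathcal{E}$, we have $|x_{j^*}-E_0|\le \epsilon$.

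First I would observe that on $\mathcal{E}$, the grid $\{x_j\}_{j\in[M]}$ covers the interval $[\tilde{E}_0-0.25\sigma,\tilde{E}_0+0.25\sigma]$, which contains $E_0$ since $|\tilde{E}_0-E_0|\le \sigma/4$, and lies inside $[E_0-0.5\sigma, E_0+0.5\sigma]$. The spacing $0.5\sigma/M$ is at most $0.5\epsilon$ because $M\ge \sigma/\epsilon$. Hence there exists an index $j'\in[M]$ with $|x_{j'}-E_0|\le 0.25\epsilon < 0.5\epsilon$. Applying Part I of Lemma~\ref{lem:convolution_separation} to $x_{j'}$ and using $\eta\le p_0$, we get
\begin{align}
|h_{j'}|\le |(g_\sigma*p)(x_{j'})|+\frac{0.1\epsilon\eta}{\sqrt{2\pi}\sigma^3}<\frac{0.6\epsilon p_0}{\sqrt{2\pi}\sigma^3}+\frac{0.1\epsilon p_0}{\sqrt{2\pi}\sigma^3}=\frac{0.7\epsilon p_0}{\sqrt{2\pi}\sigma^3}.
\end{align}
By the definition of $j^*$ as the minimizer of $|h_j|$, this yields $|h_{j^*}|<\tfrac{0.7\epsilon p_0}{\sqrt{2\pi}\sigma^3}$.

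Next I would argue by contraposition: suppose toward contradiction that $|x_{j^*}-E_0|>\epsilon$. Because $x_{j^*}$ lies in $[\tilde{E}_0-0.25\sigma,\tilde{E}_0+0.25\sigma]\subseteq [E_0-0.5\sigma,E_0+0.5\sigma]$, we conclude $x_{j^*}\in[E_0-0.5\sigma, E_0-\epsilon)\cup(E_0+\epsilon, E_0+0.5\sigma]$. Part II of Lemma~\ref{lem:convolution_separation} then gives $|(g_\sigma*p)(x_{j^*})|>\tfrac{0.8\epsilon p_0}{\sqrt{2\pi}\sigma^3}$, and combining with the approximation guarantee on $\mathcal{E}$ and $\eta\le p_0$,
\begin{align}
|h_{j^*}|\ge |(g_\sigma*p)(x_{j^*})|-\frac{0.1\epsilon\eta}{\sqrt{2\pi}\sigma^3}>\frac{0.8\epsilon p_0}{\sqrt{2\pi}\sigma^3}-\frac{0.1\epsilon p_0}{\sqrt{2\pi}\sigma^3}=\frac{0.7\epsilon p_0}{\sqrt{2\pi}\sigma^3},
\end{align}
which contradicts the bound on $|h_{j^*}|$ from the previous paragraph.

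The proof is essentially a clean ``signal vs.\ noise'' comparison with a factor of $0.1\epsilon p_0/(\sqrt{2\pi}\sigma^3)$ of slack on each side, so there is no serious obstacle; the only points that deserve care are verifying that the grid $[\tilde{E}_0-0.25\sigma,\tilde{E}_0+0.25\sigma]$ both contains $E_0$ and is contained in the region $[E_0-0.5\sigma,E_0+0.5\sigma]$ where the second part of Lemma~\ref{lem:convolution_separation} applies, and that the spacing $0.5\sigma/M\le 0.5\epsilon$ ensures a grid point sits within the $[E_0-0.5\epsilon,E_0+0.5\epsilon]$ region covered by the first part. Together with the union bound, this completes the argument.
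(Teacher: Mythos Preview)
Your proof is correct and follows essentially the same approach as the paper: condition on the good event via a union bound, use Part~I of Lemma~\ref{lem:convolution_separation} at a grid point near $E_0$ to bound $|h_{j^*}|$ from above, and use Part~II at any point $\epsilon$-far from $E_0$ to rule it out. One harmless nitpick: the claim $|x_{j'}-E_0|\le 0.25\epsilon$ can fail when $E_0$ lies just to the right of $x_M=\tilde{E}_0+0.25\sigma-0.5\sigma/M$, but the weaker bound $|x_{j'}-E_0|\le 0.5\sigma/M<0.5\epsilon$ (which is all you need) always holds.
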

\begin{proof}
By our assumptions about $\tilde{E_0}$ and $h_1, h_2, \dots,h_M$ and the union bound, we get that the following events happen simultaneously with probability at least $1-\delta$:
\begin{itemize}
    \item $|\tilde{E}_0-E_0| \le 0.25\sigma$.
    \item $|(g_\sigma*p)(x_j) - h_j|\le \frac{0.1 \epsilon \eta}{\sqrt{2\pi}\sigma^3}$, $\forall j \in [M]$.
\end{itemize}

In this case, we have $x_0, x_1, \dots, x_M \in [\tilde{E_0}-0.25\sigma, \tilde{E}_0+0.25\sigma] \subseteq [E_0 - 0.5 \sigma, E_0 + 0.5 \sigma]$. Then by Lemma \ref{lem:convolution_separation}, we have that
\begin{itemize}
    \item If $|x_j-E_0| \le 0.5\epsilon$, then
    \begin{align}
        |h_j| \le |(g_\sigma * p)(x_j)| + |(g_\sigma*p)(x_j) - h_j|
        < \frac{0.6 \epsilon p_0}{\sqrt{2\pi}\sigma^3} + \frac{0.1 \epsilon \eta}{\sqrt{2\pi}\sigma^3}
        \le \frac{0.7 \epsilon p_0}{\sqrt{2\pi}\sigma^3}.
    \end{align}
    \item If $|x_j-E_0| > \epsilon$, then
    \begin{align}
        |h_j| \ge |(g_\sigma * p)(x_j)| - |(g_\sigma*p)(x_j) - h_j|
        > \frac{0.8 \epsilon p_0}{\sqrt{2\pi}\sigma^3} - \frac{0.1 \epsilon \eta}{\sqrt{2\pi}\sigma^3}
        \ge \frac{0.7 \epsilon p_0}{\sqrt{2\pi}\sigma^3}.
    \end{align}    
\end{itemize}

Meanwhile, note that $x_1 \le E_0 \le x_M$, and $|x_{j+1}-x_{j}|\le 0.5\epsilon$, $\forall j \in [M-1]$. So there exists some $j_0 \in [M]$ such that $|x_{j_0}-E_0|\le 0.5\epsilon$. This implies that $|h_{j^*}| \le |h_{j_0}| < \frac{0.7 \epsilon p_0}{\sqrt{2\pi}\sigma^3}$, which in turn implies that $|x_{j^*}-E_0|\le \epsilon$. This lemma is thus proved.
\end{proof}

It remains to show how to generate the random variables $\tilde{E}_0$ and $h_1, h_2, \dots, h_M$ that satisfy the conditions Eqs.~\eqref{eq:cond_wte0} and \eqref{eq:cond_hjs} respectively. To obtain $\tilde{E}_0$, we use the GSEE algorithm in \cite{lin2022heisenberg} which takes $\tilde{O}(\epsilon^{-1})$ maximal Hamiltonian evolution time to achieve $\epsilon$-accuracy. Since $\tilde{E}_0$ only needs $\frac{\sigma}{4}$-accuracy, this step has $\tilde{O}(\sigma^{-1})$ maximal evolution time. To obtain $h_1, h_2, \dots, h_M$, we first introduce the band-limited version of $g_\sigma$, denoted as $g_{\sigma, T}$, in Appendix~\ref{sec:filter_truncation}, and prove that $(g_\sigma*p)(x) \approx (g_{\sigma,T}*p)(x)$ for a small $T$. Then we design a data structure \textsc{ConvEval} in Appendix~\ref{sec:complexity_analysis_eval} such that this data structure can evaluate $g_{\sigma,T}*p$ at the points $x_1,x_2,\dots,x_M$ with high accuracy and confidence after appropriate initialization.

\subsection{Gaussian derivative filters with bounded band-limits}
\label{sec:filter_truncation}
In order to efficiently evaluate $g_\sigma * p$ at any given point, we truncate the spectrum of $g_\sigma$ and construct a $T$-bandlimit version $g_{\sigma,T}$ such that
\begin{align}
    (g_\sigma * p)(x) \approx (g_{\sigma, T} * p)(x), ~~~\forall x \in \R.
\end{align}
Specifically, we define $g_{\sigma,T}$ by restricting $\hat{g}_\sigma$ to $[-T, T]$ and performing the inverse Fourier transform:
\begin{align}
    g_{\sigma, T}(x) \defeq  \int_{-T}^T \hat{g}_\sigma(\xi) e^{2\pi \i x \xi} d\xi.
\end{align}
Clearly, $g_{\sigma,T} \to g_{\sigma}$ as $T\to\infty$. The following lemma shows how to choose $T$ such that $g_{\sigma,T}$ can approximate $g_\sigma$ in $L_\infty$-norm:
\begin{lemma}\label{lem:gaussian_derivative_bandlimit}
Let $\epsilon_1>0$ be arbitrary. Then for 
\begin{align}
T \defeq \pi^{-1} \sigma^{-1} \sqrt{2 \ln{8 \pi^{-1} \epsilon_1^{-1} \sigma^{-2}  } },
\label{eq:def_T}
\end{align}
we have 
\begin{align}
|g_\sigma(x) - g_{\sigma, T}(x)| \le \frac{\epsilon_1}{2}, &~\forall x \in \R.    
\end{align}
\label{lem:upper_bound_on_T}
\end{lemma}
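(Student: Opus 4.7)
The plan is to reduce the $L^\infty$ approximation error to a one-dimensional Gaussian tail integral that no longer depends on $x$. By construction, $g_{\sigma,T}$ is obtained by restricting $\hat{g}_\sigma$ to $[-T,T]$ and inverse-Fourier-transforming, while $g_\sigma(x) = \int_{-\infty}^{\infty} \hat{g}_\sigma(\xi)e^{2\pi\i x\xi}d\xi$ by Fourier inversion. Subtracting and taking absolute values, the phase $e^{2\pi\i x\xi}$ is unimodular, so for every $x \in \R$,
\begin{align*}
    |g_\sigma(x)-g_{\sigma,T}(x)| \;=\; \left|\int_{|\xi|>T}\hat{g}_\sigma(\xi)e^{2\pi\i x\xi}d\xi\right| \;\le\; \int_{|\xi|>T}|\hat{g}_\sigma(\xi)|\,d\xi.
\end{align*}
This reduces the lemma to showing that the right-hand side is at most $\epsilon_1/2$ for the chosen $T$.

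Next, I would substitute the explicit form $|\hat{g}_\sigma(\xi)| = 2\pi|\xi|e^{-(\sigma\pi\xi)^2/2}$ and exploit the even symmetry of this integrand to reduce the tail to twice the integral over $\xi > T$. Then the change of variables $u = (\sigma\pi\xi)^2/2$, $du = \sigma^2\pi^2 \xi\,d\xi$, turns the tail into an exponential integral that can be evaluated in closed form:
\begin{align*}
    \int_{|\xi|>T}|\hat{g}_\sigma(\xi)|\,d\xi \;=\; 2\int_T^{\infty} 2\pi \xi\, e^{-(\sigma\pi\xi)^2/2}\,d\xi \;=\; \frac{4}{\pi\sigma^2}\int_{(\sigma\pi T)^2/2}^{\infty}e^{-u}\,du \;=\; \frac{4}{\pi\sigma^2}\,e^{-(\sigma\pi T)^2/2}.
\end{align*}

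Finally, I would impose $\frac{4}{\pi\sigma^2}e^{-(\sigma\pi T)^2/2} \le \epsilon_1/2$, which rearranges to $(\sigma\pi T)^2/2 \ge \ln(8\pi^{-1}\sigma^{-2}\epsilon_1^{-1})$, i.e. $T \ge \pi^{-1}\sigma^{-1}\sqrt{2\ln(8\pi^{-1}\epsilon_1^{-1}\sigma^{-2})}$. This is exactly the value of $T$ prescribed in Eq.~\eqref{eq:def_T}, and combining this with the bound from the first paragraph yields $|g_\sigma(x)-g_{\sigma,T}(x)|\le \epsilon_1/2$ uniformly in $x$. There is no substantive obstacle here—the argument is a routine Gaussian tail estimate—so the main thing to watch is that the extra factor of $|\xi|$ in $|\hat{g}_\sigma(\xi)|$ is absorbed cleanly by the substitution (no residual polynomial factor survives), which is what makes the constant $8\pi^{-1}\sigma^{-2}$ inside the logarithm in Eq.~\eqref{eq:def_T} the right one.
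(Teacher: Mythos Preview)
Your proposal is correct and essentially identical to the paper's proof: both use Fourier inversion to bound the difference by the tail integral $\int_{|\xi|>T}|\hat{g}_\sigma(\xi)|\,d\xi$, evaluate it in closed form as $\frac{4}{\pi\sigma^2}e^{-(\sigma\pi T)^2/2}$, and then solve the resulting inequality for $T$. The only cosmetic difference is that you spell out the substitution $u=(\sigma\pi\xi)^2/2$ explicitly, whereas the paper simply states the evaluated integral.
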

\begin{proof}
By the Fourier inversion theorem, we have
\begin{align}
\abs{g_\sigma(x) - g_{\sigma, T}(x)}
=& \abs{\int_{-\infty}^{-T} \hat{g}_{\sigma}(\xi) e^{2\pi \i \xi x} d\xi
+\int_{T}^{+\infty} \hat{g}_{\sigma}(\xi) e^{2\pi \i \xi x} d\xi} \nonumber\\
\le & \int_{-\infty}^{-T} \abs{\hat{g}_{\sigma}(\xi)} d\xi
+ \int_{T}^{+\infty} \abs{\hat{g}_{\sigma}(\xi)} d\xi \nonumber\\
=&2 \int_T^{+\infty} 2\pi \xi e^{-\frac{1}{2}(\sigma \pi \xi)^2} d\xi \nonumber\\
=& \frac{4}{\sigma^2 \pi} e^{-\frac{1}{2} \sigma^2 \pi^2 T^2}.
\end{align}
By solving the inequality
\begin{align}
    \frac{4}{\sigma^2 \pi} e^{-\frac{1}{2} \sigma^2 \pi^2 T^2} \le \frac{\epsilon_1}{2},
\end{align}
we get that it suffices to take
\begin{align}
T \ge  \pi^{-1} \sigma^{-1} \sqrt{2 \ln{8 \pi^{-1} \epsilon_1^{-1} \sigma^{-2}  } } .
\end{align}
\end{proof}

The following claim shows that the $L_\infty$-approximation for $g_\sigma$ implies the $L_\infty$-approximation for $g_\sigma\ast p$.
\begin{claim}\label{clm:l_infty_approx}
Let $T$ be defined as in Lemma \ref{lem:upper_bound_on_T}. Then we have 
\begin{align*}
    \abs{(g_\sigma * p)(x) - (g_{\sigma, T} * p)(x)} \leq \frac{\epsilon_1}{2}, ~&~\forall x\in \R.
\end{align*}
\end{claim}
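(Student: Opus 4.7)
The plan is to reduce the claim directly to the pointwise $L_\infty$ bound from Lemma~\ref{lem:gaussian_derivative_bandlimit} via the triangle inequality, exploiting the fact that the spectral measure $p$ defined in Eq.~\eqref{eq:def_p} is a probability measure (since $\sum_j p_j = \mathrm{tr}(\rho) = 1$ and each $p_j \ge 0$).

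First, I would expand the convolution using the definition in Eq.~\eqref{eqn:convolution}: for any $x \in \R$,
\begin{align*}
    (g_\sigma * p)(x) - (g_{\sigma, T} * p)(x) = \sum_{j=0}^{N-1} p_j \bigl[g_\sigma(x-E_j) - g_{\sigma, T}(x-E_j)\bigr].
\end{align*}
Then I would apply the triangle inequality to move the absolute value inside the sum, giving
\begin{align*}
    \abs{(g_\sigma * p)(x) - (g_{\sigma, T} * p)(x)} \le \sum_{j=0}^{N-1} p_j \abs{g_\sigma(x-E_j) - g_{\sigma, T}(x-E_j)}.
\end{align*}

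Next, I would invoke Lemma~\ref{lem:gaussian_derivative_bandlimit}, which guarantees that $|g_\sigma(y) - g_{\sigma,T}(y)| \le \epsilon_1/2$ uniformly in $y \in \R$ for the chosen $T$. Substituting $y = x - E_j$, each term in the sum is bounded by $p_j \cdot \epsilon_1/2$. Finally, using $\sum_{j=0}^{N-1} p_j = 1$, the total sum is at most $\epsilon_1/2$, which is the desired bound.

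There is no real obstacle here — the claim is essentially a one-line consequence of Lemma~\ref{lem:gaussian_derivative_bandlimit} together with the fact that convolution against a probability measure cannot amplify $L_\infty$ error. The only subtlety worth explicitly noting is the normalization $\sum_j p_j = 1$, which follows from $p_j = \bra{E_j}\rho\ket{E_j}$ and $\mathrm{tr}(\rho)=1$; if $\rho$ were only subnormalized, the same argument would still give an upper bound of $\mathrm{tr}(\rho)\cdot \epsilon_1/2 \le \epsilon_1/2$.
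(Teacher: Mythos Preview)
Your proof is correct and takes essentially the same approach as the paper's: both apply the triangle inequality to the convolution, invoke the uniform bound $|g_\sigma - g_{\sigma,T}|\le \epsilon_1/2$ from Lemma~\ref{lem:gaussian_derivative_bandlimit}, and use that $p$ is a probability measure. The only cosmetic difference is that the paper writes the convolution as an integral $\int (g_\sigma(z)-g_{\sigma,T}(z))\,p(x-z)\,dz$ before bounding, whereas you expand it directly as the finite sum $\sum_j p_j[g_\sigma(x-E_j)-g_{\sigma,T}(x-E_j)]$; these are equivalent since $p$ is a sum of Dirac deltas.
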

\begin{proof}
For any $x\in \R$, we have
\begin{align}
    \abs{(g_\sigma * p)(x) - (g_{\sigma, T} * p)(x)}
    &=\abs{\int_{-\infty}^\infty  (g_\sigma(z) - g_{\sigma, T}(z)) p(x-z) dz} \nonumber\\
    & \le \int_{-\infty}^\infty  \abs{g_\sigma(z) - g_{\sigma, T}(z)} p(x-z) dz \nonumber\\
    & \le \frac{\epsilon_1}{2}     \int_{-\infty}^\infty  p(x-z) dz \nonumber\\
    & = \frac{\epsilon_1}{2},
\end{align}
where the first step follows from the definition of convolution, the second step follows from the triangle inequality, the third step follows from  Lemma \ref{lem:upper_bound_on_T}, and the last step follows from the property of Dirac delta function.
\end{proof}

Claim~\ref{clm:l_infty_approx} implies that in order to estimate $(g_\sigma \ast p)(x)$ within $ \epsilon_1$-accuracy, it suffices to evaluate $(g_{\sigma, T}\ast p)(x)$ within $0.5 \epsilon_1$-accuracy, which can be achieved by the method in Appendix \ref{sec:complexity_analysis_eval}.

\section{Complexity of evaluating the convolution}
\label{sec:complexity_analysis_eval}
In this appendix, we focus on evaluating the convolution between a filter function $f$ and the spectral measure $p$ to within $\epsilon$-additive error. In Appendix~\ref{sec:general_eval_conv}, we develop an evaluation method for general filter functions with bounded band-limits. Then in Appendix~\ref{subsec:cost_conv_eval_gdf}, we apply the method to the Gaussian derivative filter used in our GSEE algorithm.

\subsection{Evaluating the convolution via Hadamard tests}\label{sec:general_eval_conv}

For the sake of generality, we will not restrict to a specific filter $f$ but consider arbitrary filters with bounded band-limits. Specifically, for a parameter $T>0$, let $f_T$ be a function with band-limit $T$, i.e.,
\begin{align}
     f_{T}(x)=\int_{-T}^T\hat{f}_T(t)e^{2\pi \i t x}dt,    
\end{align}
where $\hat{f}_T$ is the Fourier transform of $f_T$ and satisfies $\hat{f}_T(t)=0$ for all $t \in (-\infty, -T) \cup (T, +\infty)$. Furthermore, we require that $\hat{f}_T$ is either continuous in $[-T, T]$ or a weighted sum of Dirac delta functions (i.e., $f_T$ has a discrete spectrum). Here we will state the results for the former case, and the reader can easily generalize them to the latter case.

Given such a function $f_T$, we can define a probability density $\nu$ in terms of its Fourier weights: 
\begin{align}\label{equ:definition_nu}
\nu(t)=\frac{|\hat{f}_{T}(t)|}{\|\hat{f}_{T}\|_{1}},~~~\forall t\in [-T,T].
\end{align}
Moreover, let $\phi(t)$ be the phase of $\hat{f}_T(t)$, i.e., $\hat{f}_T(t)=|\hat{f}_T(t)|e^{\i 2 \pi \phi(t)}$. Then we have that
\begin{align}\label{equ:redefine_f}
    f_{T}(x)=\int_{-T}^T\|\hat{f}_{T}\|_{1}e^{2\pi i (t x+\phi(t))}\nu(t)dt.
\end{align}

Now given a quantum state $\rho$, a Hamiltonian $H$ and a parameter $t \in [-T,T]$, we define two random variables $\bfX_t$ and $\bfY_t$ as follows. Let $\bfb_{I}$ and $\bfb_{S^\dagger}$ be the measurement outcome of the circuit in Figure \ref{fig:hadamard_test} with $\tau=2\pi t$ and $W=I$ or $S^\dagger$ (where $S$ is the phase gate), respectively. Then we define $\bfX_t=(-1)^{\bfb_I}$ and $\bfY_t=(-1)^{\bfb_{S^\dagger}}$. As mentioned in the main text, we have that 
\begin{align}\label{eq:exp_xt_yt}
    \mathbb{E}\left[\bfX_t\right]=\textrm{Re}\left(\operatorname{tr}\left[\rho e^{-2\pi \i Ht}\right]\right),\quad \mathbb{E}\left[\bfY_t\right]=\textrm{Im}\left(\operatorname{tr}\left[\rho e^{-2\pi \i Ht}\right]\right).
\end{align}

Now given a point $x\in\R$, we define the random variable $\bfZ(x)$ as follows. Let $\bft$ be a random variable with probability density function $\nu$. Then we define
\begin{align}
    \bfZ(x)\defeq\|\hat{f}_{T}\|_{1}\cdot e^{2\pi \i (\bft x+\phi(\bft))}\cdot \lrb{\bfX_\bft+\i \bfY_\bft}.    
    \label{eq:def_zx}
\end{align}

It turns out that $\bfZ(x)$ is an unbiased estimator of the convolution $f_T \ast p$ at point $x$: 
\begin{lemma}\label{lem:expectation_value}
For the random variable $\bfZ(x)$ defined as Eq.~\eqref{eq:def_zx}, we have that
\begin{align}
\mathbb{E}[\bfZ(x)]=(f_{T}\ast p)(x), ~&~\forall x \in \R.
\end{align}
\end{lemma}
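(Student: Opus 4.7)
The plan is to compute $\mathbb{E}[\bfZ(x)]$ directly by the tower property of conditional expectation, first averaging over the Hadamard-test outcomes with $\bft$ held fixed and then averaging over $\bft$. The resulting integral should then be recognized as the Fourier inversion formula for $(f_T\ast p)(x)$. The proof is essentially a bookkeeping calculation; the main thing to verify carefully is that the magnitude $\|\hat{f}_T\|_1$, the sampling density $\nu$, and the phase factor $\phi(\bft)$ recombine to reproduce $\hat{f}_T(t)$.

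Concretely, I would first condition on $\bft=t$. Given $t$, the Hadamard-test outcomes $\bfX_t$ and $\bfY_t$ are independent of $\bft$, so by \eqref{eq:exp_xt_yt},
\begin{align*}
\mathbb{E}[\bfX_t+\i\bfY_t]
=\textrm{Re}\!\left(\operatorname{tr}[\rho e^{-2\pi\i Ht}]\right)+\i\,\textrm{Im}\!\left(\operatorname{tr}[\rho e^{-2\pi\i Ht}]\right)
=\operatorname{tr}[\rho e^{-2\pi\i Ht}].
\end{align*}
Taking the outer expectation over $\bft\sim\nu$ and applying \eqref{equ:definition_nu} together with the polar decomposition $\hat{f}_T(t)=|\hat{f}_T(t)|\,e^{2\pi\i\phi(t)}$, the combination $\|\hat{f}_T\|_1\cdot\nu(t)\cdot e^{2\pi\i\phi(t)}$ collapses to $\hat{f}_T(t)$. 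This yields
\begin{align*}
\mathbb{E}[\bfZ(x)]
=\int_{-T}^{T}\hat{f}_T(t)\,e^{2\pi\i tx}\,\operatorname{tr}[\rho e^{-2\pi\i Ht}]\,dt.
\end{align*}

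To finish, I would expand $\operatorname{tr}[\rho e^{-2\pi\i Ht}]=\sum_{j}p_j e^{-2\pi\i t E_j}$ using the spectral decomposition of $H$ and the definition of $p_j$, interchange the (finite or absolutely convergent) sum with the integral over the compact interval $[-T,T]$, and recognize the inner integral $\int_{-T}^{T}\hat{f}_T(t)\,e^{2\pi\i t(x-E_j)}\,dt$ as $f_T(x-E_j)$ by the definition of $f_T$ as the inverse Fourier transform of its band-limited spectrum. This gives $\mathbb{E}[\bfZ(x)]=\sum_j p_j f_T(x-E_j)$, which is precisely $(f_T\ast p)(x)$ in view of the form of $p$ in \eqref{eq:def_p}.

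I do not anticipate any serious obstacle. The discrete-spectrum case flagged in the preamble (where $\hat{f}_T$ is a weighted sum of Dirac deltas) is handled by the same calculation with $\nu$ replaced by the corresponding atomic probability measure and the integral over $t$ replaced by the associated finite sum; no new ideas are needed. The only mild subtleties are the phase/magnitude factorization and the interchange of summation and integration, both of which are routine.
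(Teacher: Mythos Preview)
Your proposal is correct and follows essentially the same approach as the paper: both use the tower property to first average over the Hadamard-test outcomes with $\bft$ fixed and then over $\bft\sim\nu$, arriving at the integral $\int_{-T}^{T}\hat{f}_T(t)\,e^{2\pi\i tx}\,\operatorname{tr}[\rho e^{-2\pi\i Ht}]\,dt$, which is then identified with $(f_T\ast p)(x)$ via the spectral expansion of the trace. The only cosmetic difference is that the paper verifies the last identification by expanding $(f_T\ast p)(x)$ and working toward the integral, whereas you expand the integral and work toward the convolution; the content is the same.
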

\begin{proof}
Let us first consider the conditional expectation $\mathbb{E}[\bfZ(x)|\bft=t]$ for some $t\in [-T,T]$. By Eq.~\eqref{eq:exp_xt_yt} and the definition of $\bfZ(x)$ in Eq.~\eqref{eq:def_zx}, we get
\begin{align}
    \mathbb{E}[\bfZ(x)|\bft=t]=&~ \mathbb{E}\left[\|\hat{f}_{T}\|_{1}e^{2\pi \i (\bft x+\phi(\bft))}(\bfX_\bft + \i \bfY_\bft) \big| \bft=t\right] \nonumber\\
    =&~ \|\hat{f}_{T}\|_{1}e^{2\pi \i (t_0 x+\phi(t))}\operatorname{tr}\left[\rho e^{-2\pi \i Ht}\right].
\end{align}
By the law of total expectation, we have  
\begin{align}\label{equ:expect_z_compt}
    \mathbb{E}[\bfZ(x)]= &~ \int_{-T}^{T} \mathbb{E}[\bfZ(x)|\bft=t]\cdot \P{\bft=t}dt\notag\\
    = &~\int_{-T}^{T}\|\hat{f}_{T}\|_{1}e^{2\pi \i (t x+\phi(t))}\operatorname{tr}\left[\rho e^{-2\pi \i Ht}\right]\nu(t)dt\notag\\
    =&~ \int_{-T}^{T}\hat{f}_{T}(t)e^{2\pi \i t x}\operatorname{tr}\left[\rho e^{-2\pi \i Ht}\right]{\rm d}t,
\end{align}
where the last step follows from the definition of $\nu$ in Eq.~\eqref{equ:definition_nu} and the definition of $\phi(t)$.

It remains to prove that the above expression indeed coincides with $f_T\ast p(x)$. Indeed, we have that:
\begin{align}\label{equ:before_conv_p}
    (f_T\ast p)(x)= &~ \int_{-\infty}^\infty p(x-y)f_{T}(y){\rm d}y\notag\\
    = &~ \int_{-\infty}^\infty\int_{-T}^Tp(x-y)\hat{f}_{T}(t)e^{2\pi \i t y}{\rm d}t {\rm d}y\notag\\
    =&~\int_{-T}^T\hat{f}_{T}(t) {\rm d}t\int_{-\infty}^\infty p(x-y)e^{2\pi \i t y}{\rm d}y.
\end{align}
By the definition of $p(x)$ in Eq.~\eqref{eq:def_p}, we have that 
\begin{align}
    \int_{-\infty}^\infty p(x-y)e^{2\pi \i t y}{\rm d}y=\int_{-\infty}^\infty \sum_{k\geq 0}p_k\delta(x-y-E_k)e^{2\pi \i ty}{\rm d}y = \sum_{k\geq 0}p_ke^{2\pi \i t (x-E_k)},
\end{align}
where the last step follows from the integration of Dirac delta function. Then, it implies that
\begin{align}\label{eq:f_ast_p}
    (f_T\ast p)(x) = \int_{-T}^T\hat{f}_{T}(t) {\rm d}t \cdot \sum_{k\geq 0}p_ke^{2\pi \i t (x-E_k)}
    = \int_{-T}^{T}\hat{f}_{T}(t)e^{2\pi \i t x}\operatorname{tr}\left[\rho e^{-2\pi \i Ht}\right]{\rm d}t,
\end{align}
where the last step follows from $\operatorname{tr}\left[\rho e^{-2\pi \i H t}\right]=\sum_{k\geq 0} p_ke^{-2\pi \i tE_k}$.

Comparing Eqs.~\eqref{equ:expect_z_compt} and~\eqref{eq:f_ast_p}, we conclude that $\E{\bfZ(x)}=(f_T\ast p)(x)$ for all $x\in \R$. The lemma is thus proved.

\end{proof}

With Lemma~\ref{lem:expectation_value} established, it is now straightforward to analyze how many samples we need to estimate the function $f_{T}\ast p$ at various points within a target accuracy. 
\begin{lemma}[Sample complexity of the convolution evaluation]\label{lem:sampling_overhead}
Let $\{(t^{(i)},X^{(i)},Y^{(i)})\}_{i=1}^S$ be $S$ i.i.d. samples such that $t^{(i)}\sim \nu$, $X^{(i)} \sim \bfX_{t_i}$ and $Y^{(i)} \sim \bfY_{t_i}$, where $\nu$ is defined as Eq.~\eqref{equ:definition_nu}, and $\bfX_t$ and $\bfY_t$ are the measurement outcome of the circuit in Figure \ref{fig:hadamard_test} with $\tau=2\pi t$ and $W=I$ or $S$, respectively. Let $x_1,x_2,\ldots,x_M\in \R$ be arbitrary. For each $j\in [M]$, let $\overline{Z}_{j}$ be defined as follows:
\begin{align}
     \overline{Z}_j := \frac{\|\hat{f}_{T}\|_{1}}{S}\sum_{i=1}^S e^{2\pi \i (t^{(i)} x_j+\phi(t^{(i)}))}\cdot (X^{(i)}+\i Y^{(i)}).
     \label{eq:def_zj}
\end{align}
Then for any $\epsilon_1>0$ and $\delta_1 \in (0,1)$, letting
\begin{align}\label{equ:number_samples}
S \defeq \left\lceil \frac{\|\hat{f}_T\|_1^2 \ln{4M/\delta_1}}{\epsilon_1^2} \right\rceil,
\end{align}
we have
\begin{align}
\P{\forall j\in [M]: |\overline{Z}_j-(f_T\ast p)(x_j)|\leq \epsilon_1} \geq 1-\delta_1.
\end{align}
\label{prop:number_samples_bound}
\end{lemma}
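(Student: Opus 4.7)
The plan is to combine Lemma~\ref{lem:expectation_value} (which gives unbiasedness) with Hoeffding's inequality applied to the real and imaginary parts of the estimator, and then a union bound over the $M$ evaluation points. Concretely, let $Z^{(i)}_j \defeq \|\hat{f}_T\|_1 e^{2\pi\i(t^{(i)}x_j+\phi(t^{(i)}))}(X^{(i)}+\i Y^{(i)})$, so that $\overline{Z}_j = \frac{1}{S}\sum_i Z^{(i)}_j$. Since the triples $(t^{(i)},X^{(i)},Y^{(i)})$ are i.i.d.\ with the same joint distribution as $(\bft,\bfX_\bft,\bfY_\bft)$, Lemma~\ref{lem:expectation_value} gives $\E{Z^{(i)}_j}=(f_T\ast p)(x_j)$ for every $j$, hence $\overline{Z}_j$ is an unbiased estimator of $(f_T\ast p)(x_j)$.

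Next I would establish the boundedness needed for Hoeffding. Because the Hadamard-test outcomes satisfy $X^{(i)},Y^{(i)}\in\{+1,-1\}$, the complex phase $e^{2\pi\i(\cdot)}$ has unit modulus, and the real and imaginary parts of $e^{\i\theta}(X+\i Y)$ are each of the form $\cos\theta\cdot X \mp \sin\theta\cdot Y$, both bounded in absolute value by $\sqrt{2}$. Therefore $\rep(Z^{(i)}_j)$ and $\imp(Z^{(i)}_j)$ each take values in an interval of length at most $2\sqrt{2}\|\hat{f}_T\|_1$. Applying Hoeffding's inequality \cite{Hoeffding1963} separately to the real and imaginary components of $\overline{Z}_j-(f_T\ast p)(x_j)$, and asking for each to have magnitude at most $\epsilon_1/\sqrt{2}$, yields a failure probability of at most $2\exp(-cS\epsilon_1^2/\|\hat{f}_T\|_1^2)$ per component for a small absolute constant $c$.

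Finally, I would close the argument by a union bound: summing over the two components and the $M$ evaluation points, the event that any component deviates by more than $\epsilon_1/\sqrt{2}$ has probability at most $4M\exp(-cS\epsilon_1^2/\|\hat{f}_T\|_1^2)$. Whenever this bad event does not occur, the inequality $|\overline{Z}_j-(f_T\ast p)(x_j)|\le \sqrt{2}\max(|\rep(\cdot)|,|\imp(\cdot)|)\le \epsilon_1$ holds simultaneously for all $j\in[M]$. Solving $4M\exp(-cS\epsilon_1^2/\|\hat{f}_T\|_1^2)\le \delta_1$ for $S$ reproduces the claimed bound $S=\lceil \|\hat{f}_T\|_1^2\ln(4M/\delta_1)/\epsilon_1^2\rceil$ up to the absolute constant.

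The main obstacle is a bookkeeping one: matching the precise numerical constant in Eq.~\eqref{equ:number_samples}. Everything downstream (unbiasedness, boundedness, concentration, union bound) is standard, so the only delicate step is choosing how to split the target accuracy $\epsilon_1$ between the real and imaginary parts and which tail bound (Hoeffding with range $2\sqrt{2}\|\hat{f}_T\|_1$, or a slightly sharper variant exploiting the conditional independence of $X^{(i)}$ and $Y^{(i)}$ given $t^{(i)}$) gives exactly the prefactor stated. No genuinely new idea is required beyond Lemma~\ref{lem:expectation_value}, and in particular the fact that the $t^{(i)}$'s can be sampled classically from $\nu$ before any quantum circuit is run (cf.\ Algorithm~\ref{alg:eval_conv}) is already built into the i.i.d.\ assumption.
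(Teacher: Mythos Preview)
Your proposal is correct and follows essentially the same route as the paper: unbiasedness from Lemma~\ref{lem:expectation_value}, boundedness of the real and imaginary parts, Hoeffding's inequality applied componentwise at threshold $\epsilon_1/\sqrt{2}$, and a union bound over the two components and the $M$ points. Your range bound of $\sqrt{2}\,\|\hat f_T\|_1$ for each component is in fact the sharp one (the paper states $\|\hat f_T\|_1$, which is slightly too small), so your acknowledged mismatch in the absolute constant is genuine but affects nothing beyond the prefactor in $S$.
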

\begin{proof}
Recall that $\bfZ(x)=\|\hat{f}_{T}\|_{1}\cdot e^{2\pi \i (\bft x+\phi(\bft))}\cdot \lrb{\bfX_\bft+\i \bfY_\bft}$ for any $x \in \R$. Then $\bar{Z}_j$ is the empirical mean of $S$ i.i.d. samples of $\bfZ(x_j)$ that correspond to $\{(t^{(i)},X^{(i)},Y^{(i)})\}_{i=1}^S$, for each $j \in [M]$. Since $\bfX_\bft$ and $\bfY_\bft$ take values in $\lrcb{1, -1}$, we know that $\rep\lrb{\bfZ(x)}$ and $\imp\lrb{\bfZ(x)}$ take values in $[-\|\hat{f}_{T}\|_{1}, \|\hat{f}_{T}\|_{1}]$.
It then follows from Hoeffding's inequality~\cite{Hoeffding1963} that for our choice of $S$ in Eq.~\eqref{equ:number_samples}, for any $j\in [M]$, it holds that
\begin{align}
    \P{|\rep\lrb{\overline{Z}_j}-\mathbb{E}[\rep\lrb{\bfZ(x_j)}]|> \frac{\epsilon_1}{\sqrt{2}}}<\frac{\delta_1}{2M}, \\
    \P{|\imp\lrb{\overline{Z}_j}-\mathbb{E}[\imp\lrb{\bfZ(x_j)}]|> \frac{\epsilon_1}{\sqrt{2}}}<\frac{\delta_1}{2M}.
\end{align}
Then by the triangle inequality and union bound, we get
\begin{align}
    \P{|\overline{Z}_j-\mathbb{E}[\bfZ(x_j)]|> {\epsilon_1}}<\frac{\delta_1}{M}.
\end{align}
Meanwhile, by Lemma~\ref{lem:expectation_value}, we know that
\begin{align}
    \mathbb{E}[\bfZ(x_j)] = (f_T\ast p)(x_j). 
\end{align}
Thus, we have
\begin{align}
    \P{|\overline{Z}_j-(f_T\ast p)(x_j)]|> {\epsilon_1} }<\frac{\delta_1}{M}.
\end{align}
By a union bound over all $j\in [M]$, we get that
\begin{align}
    \P{\exists j\in [M]:|\overline{Z}_j-(f_T\ast p)(x_j)]|> {\epsilon_1}} < \delta_1.
\end{align}
\end{proof}

\begin{remark}
Note that $\overline{Z}_j$ is a complex number in general, but $(f_T \ast p)(x_j)$ is real provided that $f_T$ is real. In this case, we can re-define $\overline{Z}_j$ as the real part of the right-hand side of Eq.~\eqref{eq:def_zj} and Lemma \ref{lem:sampling_overhead} will still hold. We envision that in some scenarios, it is useful to have a complex filter $f_T$, and hence define $\overline{Z}_j$ as Eq.~\eqref{eq:def_zj} for the sake of generality.    
\end{remark}

We have given a data structure $\textsc{ConvEval}$ in Algorithm \ref{alg:eval_conv} for evaluating the convolution $f_T \ast p$ at multiple points. Lemma \ref{lem:sampling_overhead} immediately implies that:
\begin{corollary}
Let $x_1,x_2,\ldots,x_M\in \R$ be arbitrary. Suppose the data structure \textsc{ConvEval} is initialized with parameters $(f_T, \epsilon, \delta, M)$. Let $h_j$ be the output of the procedure $\textsc{ConvEval.Eval}(x_j)$ for $j \in [M]$. Then we have 
\begin{align}
\P{\forall j \in [M]:|(f_T*p)(x_j)-h_j| \le \epsilon} \ge 1-\delta.
\end{align}
\label{cor:validity_conv_eval}
\end{corollary}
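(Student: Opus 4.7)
The plan is to show that the Corollary is a direct instantiation of Lemma~\ref{lem:sampling_overhead} applied to the particular estimator constructed inside the \textsc{ConvEval} data structure, with $\epsilon_1 = \epsilon$ and $\delta_1 = \delta$. Hence the proof is essentially a matter of unpacking definitions and verifying a syntactic correspondence between the samples produced by \textsc{ConvEval.Init} and the i.i.d. triples $\{(t^{(i)}, X^{(i)}, Y^{(i)})\}_{i=1}^S$ that appear in Lemma~\ref{lem:sampling_overhead}.

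The first step is to observe that, by construction of \textsc{ConvEval.Init}, the sampler \texttt{FS} returns $t^{(i)}$ drawn from the probability density $\nu$ of Eq.~\eqref{equ:definition_nu}, and that the two Hadamard test calls ${\cal C}(H,\rho,t^{(i)},I)$ and ${\cal C}(H,\rho,t^{(i)},S^{\dagger})$ produce $x^{(i)}\sim \bfX_{t^{(i)}}$ and $y^{(i)}\sim \bfY_{t^{(i)}}$ respectively, with the correct conditional expectations recorded in Eq.~\eqref{eq:exp_xt_yt}. Thus the triples $(t^{(i)}, x^{(i)}, y^{(i)})$ satisfy exactly the distributional hypotheses of Lemma~\ref{lem:sampling_overhead}.

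Next, I would expand the definition of $z^{(i)}$ inside \textsc{Init}, namely $z^{(i)} = L \cdot e^{2\pi\i \phi(t^{(i)})}(x^{(i)}+\i y^{(i)})$ with $L = \|\hat{f}_T\|_1$, and substitute it into the output of $\textsc{Eval}(x_j)$ to obtain
\begin{align*}
h_j \;=\; \frac{1}{S}\sum_{i=1}^{S} e^{2\pi\i t^{(i)} x_j}\, z^{(i)} \;=\; \frac{\|\hat f_T\|_1}{S}\sum_{i=1}^{S} e^{2\pi\i(t^{(i)}x_j+\phi(t^{(i)}))}\bigl(x^{(i)}+\i y^{(i)}\bigr),
\end{align*}
which is precisely the random variable $\overline{Z}_j$ defined in Eq.~\eqref{eq:def_zj}. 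The sample count set inside \textsc{Init}, $S = \lceil L^2 \ln(4M/\delta)/\epsilon^2\rceil$, also matches Eq.~\eqref{equ:number_samples} with $\epsilon_1 = \epsilon$ and $\delta_1 = \delta$.

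With these identifications in place, Lemma~\ref{lem:sampling_overhead} applies verbatim and immediately yields
\begin{align*}
\P{\forall j \in [M]:\, |(f_T \ast p)(x_j) - h_j| \le \epsilon} \;\ge\; 1-\delta,
\end{align*}
as claimed. There is no real obstacle here beyond checking that the phase factor $e^{2\pi\i\phi(t^{(i)})}$ is absorbed correctly into $z^{(i)}$ rather than being re-applied in \textsc{Eval}, and that the independence of successive samples is preserved by the sequential \texttt{for} loop in \textsc{Init}; both are straightforward. The whole argument should be a short paragraph invoking Lemma~\ref{lem:sampling_overhead}.
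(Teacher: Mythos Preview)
Your proposal is correct and matches the paper's approach exactly: the paper simply states that the corollary follows immediately from Lemma~\ref{lem:sampling_overhead}, and your write-up just unpacks the (straightforward) identification between the samples and estimator produced by \textsc{ConvEval} and those appearing in that lemma.
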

 
\begin{lemma}[Running time of the convolution evaluation data structure]
Suppose the data structure \textsc{FilterSampler} runs in ${\cal O}(1)$-time. Then, the data structure \textsc{ConvEval} in Algorithm~\ref{alg:eval_conv} has the following running times:
\begin{itemize}
    \item Procedure \textsc{Init}$(f_T, \epsilon,\delta, M)$ has ${\cal O}(T)$ maximal evolution time and ${\cal O}(ST)$ total evolution time, where $S={\cal O}(\epsilon^{-2} \|\hat{f}_T\|_1^2\log{\delta^{-1}M})$. 
    \item Procedure \textsc{Eval}$(x)$ has ${\cal O}(S)$ classical post-processing time.
\end{itemize}
\label{lem:conv_eval_cost}
\end{lemma}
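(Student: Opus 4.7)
The plan is to verify the two bullet points by directly inspecting Algorithm~\ref{alg:eval_conv} line by line and bounding the cost of each step. This is essentially a bookkeeping exercise built on top of the sampling bound already established in Lemma~\ref{lem:sampling_overhead}; no new estimates are required.

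For the \textsc{Init} procedure, I would first read off the sample count from the algorithm: since $L=\|\hat{f}_T\|_1$ is returned by \textsc{FS.Norm}() and $S=\lceil L^2\ln(4M/\delta)/\epsilon^2\rceil$, we immediately obtain $S={\cal O}(\epsilon^{-2}\|\hat{f}_T\|_1^2\log(M/\delta))$, matching the stated expression. The only quantum work occurs inside the for-loop, where each iteration invokes two Hadamard tests from Figure~\ref{fig:hadamard_test} with $\tau=2\pi t^{(i)}$. Because $\hat{f}_T$ is supported in $[-T,T]$ (the standing assumption at the start of Appendix~\ref{sec:general_eval_conv}) and $t^{(i)}$ is drawn from the density $\nu\propto|\hat{f}_T|$, every sampled time satisfies $|t^{(i)}|\le T$, so each Hadamard test has Hamiltonian evolution time at most $2\pi T={\cal O}(T)$. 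This immediately yields ${\cal T}_{\rm max}={\cal O}(T)$, and summing over all $2S$ Hadamard tests gives total evolution time ${\cal O}(ST)$. The remaining per-iteration work, namely one call to \textsc{FS.Sample}() (which is ${\cal O}(1)$ by hypothesis) and the ${\cal O}(1)$ classical arithmetic used to assemble $z^{(i)}$, does not affect these evolution-time bounds.

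For the \textsc{Eval} procedure, the returned value $\overline{Z}=\frac{1}{S}\sum_{i\in[S]} e^{2\pi\i t^{(i)}x}\,z^{(i)}$ is obtained with $O(S)$ complex multiplications and additions and no further quantum circuitry, so the classical post-processing cost is ${\cal O}(S)$, as claimed. The only subtlety worth flagging explicitly is that the ${\cal O}(T)$ per-circuit depth depends crucially on the support condition $\operatorname{supp}(\hat{f}_T)\subseteq[-T,T]$; once this is noted, the rest of the argument is pure counting, and I do not anticipate any technical obstacle.
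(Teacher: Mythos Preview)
Your proposal is correct and follows essentially the same approach as the paper's own proof: both arguments inspect Algorithm~\ref{alg:eval_conv} directly, observe that \textsc{Init} performs $2S$ Hadamard tests with $|t^{(i)}|\le T$ (since $\hat{f}_T$ is supported in $[-T,T]$) to get ${\cal T}_{\max}={\cal O}(T)$ and ${\cal T}_{\rm tot}={\cal O}(ST)$, and note that \textsc{Eval} is a purely classical sum over $S$ terms. Your write-up is more detailed (explicitly tracking the \textsc{FilterSampler} calls and the per-iteration arithmetic), but the substance is identical.
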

\begin{proof}
The \textsc{ConvEval.Init} procedure runs the Hadamard test circuit $2S$ times to get the samples $\lrcb{(x^{(i)}, y^{(i)})}_{i=1}^S$. Since the filter function $f_T$ has spectrum bounded in $[-T, T]$, the maximal evolution time is $2 \pi T$ and the total evolution time is at most $4 \pi ST$.

The \textsc{ConvEval.Eval} procedure then uses the $S$ samples to compute the estimate of $(f_T\ast p)(x)$. Moreover, the computation is classical and elementary. 
\end{proof}

\subsection{Application to Gaussian derivative filters}
\label{subsec:cost_conv_eval_gdf}
In this appendix, we apply the data structure \textsc{ConvEval} to the band-limited Gaussain derivative filter $g_{\sigma, T}$:
\begin{align}
    g_{\sigma, T}(x) =  \int_{-T}^T \hat{g}_\sigma(\xi) e^{2\pi \i x \xi} d\xi=2 \pi \i \int_{-T}^T \xi e^{-\frac{1}{2} (\sigma \pi \xi )^2+2\pi \i x \xi} d\xi.
    \label{eq:def_g_sigma_t2}
\end{align}

To apply Lemma~\ref{lem:sampling_overhead}, we first bound the $L_1$-norm of its spectrum.
\begin{claim}\label{clm:gaussian_derivative_l1}
Let $g_{\sigma, T}$ be defined as Eq.~\eqref{eq:def_g_sigma_t2}. Then we have $\|\hat{g}_{\sigma,T}\|_1\leq \frac{4}{\pi\sigma^2}$.
\end{claim}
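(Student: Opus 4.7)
The plan is a direct computation. Since $g_{\sigma,T}$ is defined by restricting $\hat{g}_\sigma$ to $[-T,T]$, we have $\hat{g}_{\sigma,T}(\xi) = \hat{g}_\sigma(\xi)\cdot \mathbb{1}_{[-T,T]}(\xi)$, and recall from earlier in the appendix that $\hat{g}_\sigma(\xi) = 2\pi\i\,\xi\, e^{-\frac{1}{2}(\sigma\pi\xi)^2}$. Therefore
\begin{align*}
\|\hat{g}_{\sigma,T}\|_1 = \int_{-T}^{T} |2\pi\i\,\xi| \cdot e^{-\frac{1}{2}(\sigma\pi\xi)^2}\, d\xi = 2\pi \int_{-T}^{T} |\xi|\, e^{-\frac{1}{2}(\sigma\pi\xi)^2}\, d\xi.
\end{align*}

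The integrand is even, so the integral equals $4\pi \int_0^T \xi\, e^{-\frac{1}{2}(\sigma\pi\xi)^2}\, d\xi$. This integral is elementary: the substitution $u=\frac{1}{2}(\sigma\pi\xi)^2$, so that $du = \sigma^2\pi^2 \xi\, d\xi$, gives
\begin{align*}
4\pi \int_0^T \xi\, e^{-\frac{1}{2}(\sigma\pi\xi)^2}\, d\xi \;=\; \frac{4}{\pi\sigma^2} \int_0^{\frac{1}{2}(\sigma\pi T)^2} e^{-u}\, du \;=\; \frac{4}{\pi\sigma^2}\lrb{1 - e^{-\frac{1}{2}(\sigma\pi T)^2}}.
\end{align*}

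Since the factor $1 - e^{-\frac{1}{2}(\sigma\pi T)^2}$ lies in $[0,1)$, we conclude $\|\hat{g}_{\sigma,T}\|_1 \le \frac{4}{\pi\sigma^2}$, which proves the claim. There is no real obstacle here: the only thing to be slightly careful about is tracking constants correctly through the substitution, and noting that the bound is obtained by simply dropping the (nonnegative) exponential correction term, which is automatic as $T\to\infty$ recovers the corresponding integral for the untruncated $\hat{g}_\sigma$.
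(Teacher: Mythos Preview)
Your proof is correct and follows essentially the same approach as the paper: a direct computation of the $L^1$ norm of $\hat{g}_{\sigma,T}$ using the explicit formula $\hat{g}_\sigma(\xi)=2\pi\i\,\xi\,e^{-\frac{1}{2}(\sigma\pi\xi)^2}$. The only cosmetic difference is that the paper first bounds $\|\hat{g}_{\sigma,T}\|_1\le\|\hat{g}_\sigma\|_1$ and then evaluates the untruncated integral to $\frac{4}{\pi\sigma^2}$, whereas you evaluate the truncated integral exactly to $\frac{4}{\pi\sigma^2}\bigl(1-e^{-\frac{1}{2}(\sigma\pi T)^2}\bigr)$ and then drop the exponential term; these are the same calculation in a different order.
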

\begin{proof}
By the fact that $\hat{g}_{\sigma, T}(\xi) = \hat{g}_\sigma(\xi) {\bf 1}_{|\xi|\le T}$ and direct calculation, we obtain
\begin{align}
    \lonenorm{\hat{g}_{\sigma, T}}\le \lonenorm{\hat{g}_{\sigma}} 
    = \int_{-\infty}^{+\infty} \abs{2\pi \i \xi e^{-\frac{1}{2}(\sigma \pi \xi)^2}} d\xi = \int_{0}^{+\infty} 4\pi \xi e^{-\frac{1}{2}(\sigma \pi \xi)^2} d\xi
    = \frac{4}{\pi \sigma^2}.
\end{align}
\end{proof}

Then we get the following corollary on the sample complexity of evaluating $g_{\sigma,T}\ast p$ on $M$ points.

\begin{corollary}
\label{cor:gauss_deriv_conv_sample_complexity}
Let $\epsilon_1>0$, $\delta_1 \in (0,1)$ and $x_1,x_2,\dots,x_M \in \R$ be arbitrary. Suppose the data structure \textsc{ConvEval} is initialized with parameters $(g_{\sigma, T}, \epsilon_1, \delta_1, M)$. Let $h_j$ be the output of the procedure $\textsc{ConvEval.Eval}(x_j)$ for $j \in [M]$. Then we have 
\begin{align}
\P{\forall j \in [M]:|(g_{\sigma, T}*p)(x_j)-h_j| \le \epsilon_1} \ge 1-\delta_1.
\label{eq:validity_conv_eval_gaussian_derivative}
\end{align}
Furthermore, it take $S={\cal O}(\epsilon_1^{-2}\sigma^{-4}\log{M/\delta_1})$ samples from Hadamard tests to obtain $h_1$, $h_2$, $\dots,$ $h_M$.
\end{corollary}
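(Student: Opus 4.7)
The plan is to obtain this corollary as an essentially immediate specialization of the general convolution-evaluation framework (Lemma \ref{lem:sampling_overhead} and Corollary \ref{cor:validity_conv_eval}) to the particular band-limited filter $g_{\sigma,T}$, using the $L^1$-norm bound from Claim \ref{clm:gaussian_derivative_l1}.

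First, I would verify that $g_{\sigma,T}$ fits the hypotheses of the general framework. By definition in Eq.~\eqref{eq:def_g_sigma_t2}, $g_{\sigma,T}$ has Fourier transform $\hat{g}_{\sigma,T}(\xi) = \hat{g}_\sigma(\xi)\mathbf{1}_{|\xi|\le T}$, which is continuous on $[-T,T]$ and vanishes outside, so $g_{\sigma,T}$ is a legitimate band-limited filter in the sense of Appendix~\ref{sec:general_eval_conv}. Therefore the data structure \textsc{ConvEval} initialized with $(g_{\sigma,T}, \epsilon_1, \delta_1, M)$ is well-defined, and Corollary~\ref{cor:validity_conv_eval} applied directly with $f_T = g_{\sigma,T}$ yields the probability bound in Eq.~\eqref{eq:validity_conv_eval_gaussian_derivative} without any further work.

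For the sample complexity, I would invoke the explicit formula for $S$ from Lemma~\ref{lem:sampling_overhead}, namely
\begin{align}
S = \left\lceil \frac{\|\hat{g}_{\sigma,T}\|_1^2\,\ln(4M/\delta_1)}{\epsilon_1^2}\right\rceil,
\end{align}
and then substitute the bound $\|\hat{g}_{\sigma,T}\|_1 \le 4/(\pi\sigma^2)$ established in Claim~\ref{clm:gaussian_derivative_l1}. This immediately gives
\begin{align}
S \;\le\; \left\lceil \frac{16\,\ln(4M/\delta_1)}{\pi^2\,\sigma^4\,\epsilon_1^2}\right\rceil \;=\; \mathcal{O}\!\left(\epsilon_1^{-2}\sigma^{-4}\log(M/\delta_1)\right),
\end{align}
which is the claimed sample bound.

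There is no real obstacle here; the corollary is a direct application of previously proved results. The only thing to double-check is that the $L^1$-norm bound in Claim~\ref{clm:gaussian_derivative_l1} was proved for $\hat{g}_{\sigma,T}$ with the correct truncation convention used in \textsc{ConvEval}, which it is. Thus the proof reduces to citing Corollary~\ref{cor:validity_conv_eval} for the accuracy guarantee and combining Lemma~\ref{lem:sampling_overhead} with Claim~\ref{clm:gaussian_derivative_l1} for the sample count.
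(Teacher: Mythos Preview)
Your proposal is correct and follows essentially the same route as the paper: the paper's proof likewise cites Claim~\ref{clm:gaussian_derivative_l1} to get $\|\hat{g}_{\sigma,T}\|_1 = \mathcal{O}(\sigma^{-2})$, substitutes this into the sample-count formula to obtain $S = \mathcal{O}(\epsilon_1^{-2}\sigma^{-4}\log(M/\delta_1))$, and invokes Corollary~\ref{cor:validity_conv_eval} for the probability guarantee. Your additional check that $g_{\sigma,T}$ satisfies the band-limited hypotheses is a harmless extra detail the paper leaves implicit.
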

\begin{proof}
Claim \ref{clm:gaussian_derivative_l1} implies $\|\hat{g}_{\sigma,T}\|_1={\cal O}(\sigma^{-2})$. Thus the procedure \textsc{ConvEval.Init}($g_{\sigma, T}, \epsilon_1, \delta_1, M$) draws $S={\cal O}(\epsilon_1^{-2} \|\hat{g}_{\sigma,T}\|_1^2 \log{M/\delta_1})
={\cal O}(\epsilon_1^{-2}\sigma^{-4}\log{M/\delta_1})$ samples from Hadamard tests. Then Eq.~\eqref{eq:validity_conv_eval_gaussian_derivative} follows immediately from Corollary \ref{cor:validity_conv_eval}.
\end{proof}

\section{Main Theorem}
\label{sec:formal_description}

In this appendix, we describe our main results about ground state energy estimation. {Recall that we have presented} a $\tilde{O}(\Delta^{-1})$-depth algorithm for GSEE in Algorithm \ref{alg:low_depth_gsee}.

\begin{theorem}[Ground state energy estimation]\label{thm:gsee_main_formal}
Let $H=\sum_{j=0}^{N-1} E_j \ketbra{E_j}{E_j}$ be a Hamiltonian such that $E_0 < E_1 \le E_2 \le \dots \le E_{N-1}$ are the eigenvalues of $H$, and the $\ket{E_j}$'s are orthonormal eigenstates of $H$. Suppose we are given access to the Hamiltonian evolution $e^{\i Ht}$ for any $t\in \R$. Let $\Delta>0$ be given such that $\Delta \leq E_1-E_0$. Moreover, suppose we can prepare a state $\rho$ such that $\bra{E_0} \rho \ket{E_0} \ge \eta$ for known $\eta > 0$.

{Let $\epsilon>0$ be small enough such that it satisfies the condition in Lemma \ref{lem:convolution_separation}, and let $\delta\in (0,1)$ be arbitrary. Then
the output of Algorithm \ref{alg:low_depth_gsee} (i.e. $\mathrm{GSEE}(H, \rho, \epsilon, \delta, \Delta, \eta)$) is $\epsilon$-close to $E_0$ with probability at least $1-\delta$. Furthermore, in this algorithm,}
\begin{itemize}
    \item The maximal Hamiltonian evolution time is $\tilde{{\cal O}}(\Delta^{-1})$;
    \item The total Hamiltonian evolution time is $\tilde{{\cal O}}(\eta^{-2}\epsilon^{-2}\Delta)$;
    \item The classical running time is $\tilde{\cal O}(\eta^{-2}\epsilon^{-3}\Delta^{3})$.
\end{itemize}
\end{theorem}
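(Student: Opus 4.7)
The plan is to decompose the argument into a correctness part and a complexity part, assembling the machinery already proved in the appendices. For correctness, I would first observe that the choice of $\sigma$ in line 2 of Algorithm \ref{alg:low_depth_gsee} is precisely the one from Eq.~\eqref{eq:def_sigma}, so Lemma \ref{lem:convolution_separation} applies (its hypothesis on $\epsilon$ being what the theorem assumes). I would then check the two hypotheses of Lemma \ref{lem:gsee_correctnesss}. The estimate $\tilde E_0$ produced by the subroutine of \cite{lin2022heisenberg} with target accuracy $\sigma/4$ and confidence $1-\delta/2$ satisfies Eq.~\eqref{eq:cond_wte0} by definition. For the $h_j$'s, the algorithm invokes \textsc{ConvEval.Init} with the band-limit $T$ prescribed by Lemma \ref{lem:gaussian_derivative_bandlimit} (relative to $\tilde\epsilon = 0.1\epsilon\eta/(\sqrt{2\pi}\sigma^3)$), target accuracy $\tilde\epsilon/2$, confidence $\delta/2$, and $M$ grid points, so Corollary \ref{cor:gauss_deriv_conv_sample_complexity} yields $|h_j - (g_{\sigma,T}\ast p)(x_j)| \le \tilde\epsilon/2$ simultaneously for all $j$ with probability $\ge 1-\delta/2$. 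Combining this with Claim \ref{clm:l_infty_approx}, which gives $|(g_\sigma\ast p)(x) - (g_{\sigma,T}\ast p)(x)| \le \tilde\epsilon/2$ everywhere, the triangle inequality produces Eq.~\eqref{eq:cond_hjs}. A union bound over the two failure events and an application of Lemma \ref{lem:gsee_correctnesss} then yield $\mathbb{P}[|x_{j^*}-E_0|>\epsilon] < \delta$.

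For the resource counts, I would treat the two quantum stages separately. The coarse-estimation stage via \cite{lin2022heisenberg} targets $\sigma/4$-accuracy, contributing $\tilde{\cal O}(\sigma^{-1}) = \tilde{\cal O}(\Delta^{-1})$ maximal evolution time and $\tilde{\cal O}(\eta^{-2}\sigma^{-1}) = \tilde{\cal O}(\eta^{-2}\Delta^{-1})$ total evolution time, both of which are dominated by (or matched by) the convolution stage. For \textsc{ConvEval}, Lemma \ref{lem:gaussian_derivative_bandlimit} gives $T = \tilde{\cal O}(\sigma^{-1})$, so by Lemma \ref{lem:conv_eval_cost} each Hadamard test has evolution time $\tilde{\cal O}(\Delta^{-1})$. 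The sample count from Corollary \ref{cor:gauss_deriv_conv_sample_complexity} is $S = \tilde{\cal O}(\tilde\epsilon^{-2}\sigma^{-4})$; substituting $\tilde\epsilon = \Theta(\epsilon\eta\sigma^{-3})$ gives $S = \tilde{\cal O}(\eta^{-2}\epsilon^{-2}\sigma^{2})$, and multiplying by $T$ yields total evolution time $\tilde{\cal O}(\eta^{-2}\epsilon^{-2}\sigma) = \tilde{\cal O}(\eta^{-2}\epsilon^{-2}\Delta)$ for the main stage.

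For the classical running time, the for-loop in Algorithm \ref{alg:low_depth_gsee} makes $M = \lceil \sigma/\epsilon\rceil + 1 = \tilde{\cal O}(\Delta/\epsilon)$ calls to \textsc{ConvEval.Eval}, each costing ${\cal O}(S)$ elementary arithmetic operations by Lemma \ref{lem:conv_eval_cost}; the product is $\tilde{\cal O}(\eta^{-2}\epsilon^{-3}\Delta^{3})$, which absorbs the classical cost of the \cite{lin2022heisenberg} subroutine and the final $\arg\min$. The step I expect to require the most care is the bookkeeping around $\sigma$, $\tilde\epsilon$, and $T$: the hypothesis ``$\epsilon$ sufficiently small'' from Lemma \ref{lem:convolution_separation} must be propagated so that all inequalities used in Lemmas \ref{lem:convolution_separation}--\ref{lem:gsee_correctnesss} hold with the exact constants chosen in the algorithm, and the two approximation slacks ($\tilde\epsilon/2$ from band-limiting and $\tilde\epsilon/2$ from sampling) must add up to exactly the $0.1\epsilon\eta/(\sqrt{2\pi}\sigma^3)$ tolerance that Lemma \ref{lem:gsee_correctnesss} requires; everything else is a direct plug-in.
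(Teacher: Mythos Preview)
Your proposal is correct and follows essentially the same route as the paper's own proof: verify the two hypotheses of Lemma~\ref{lem:gsee_correctnesss} via the \cite{lin2022heisenberg} subroutine and the combination of Claim~\ref{clm:l_infty_approx} with Corollary~\ref{cor:gauss_deriv_conv_sample_complexity} (each contributing $\tilde\epsilon/2$), then read off the resource bounds by substituting $\sigma=\tilde\Theta(\Delta)$, $T=\tilde{\cal O}(\sigma^{-1})$, $\tilde\epsilon=\Theta(\epsilon\eta\sigma^{-3})$, and $M=\tilde\Theta(\Delta/\epsilon)$ into $S=\tilde{\cal O}(\tilde\epsilon^{-2}\sigma^{-4})$ and into the costs from Lemma~\ref{lem:conv_eval_cost}. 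The only cosmetic difference is that the union bound you mention is already absorbed into the statement of Lemma~\ref{lem:gsee_correctnesss}, so invoking it separately is redundant but harmless.
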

\begin{proof}

{We first prove the correctness of Algorithm \ref{alg:low_depth_gsee}.} By construction, $\tilde{E}_0$ satisfies Eq.~\eqref{eq:cond_wte0} in Lemma \ref{lem:gsee_correctnesss}. Meanwhile, by Claim~\ref{clm:l_infty_approx} and the choice of $T$ in Algorithm \ref{alg:gsee}, we have
\begin{align}
    \abs{(g_\sigma * p)(x) - (g_{\sigma, T} * p)(x)} \leq \frac{\tilde{\epsilon}}{2},  ~&~\forall x \in \R,
    \label{eq:dist_gsigmap_gsigmatp2}
\end{align}
Meanwhile, since $\textsc{ConvEval}$ is initialized with parameters $({H, \rho}, g_{\sigma, T}, \tilde{\epsilon}/2, \delta/2, M)$ in Algorithm \ref{alg:gsee}, by Corollary~\ref{cor:gauss_deriv_conv_sample_complexity}, we get
\begin{align}
\P{\forall j \in [M]:|(g_{\sigma, T}*p)(x_j)-h_j| \le \frac{\tilde{\epsilon}}{2}} \ge 1-\frac{\delta}{2}.
    \label{eq:prob_eval_err_large}
\end{align}
Then it follows from Eqs.~\eqref{eq:dist_gsigmap_gsigmatp2} and \eqref{eq:prob_eval_err_large} and the triangle inequality that
\begin{align}
\P{\forall j \in [M]:|(g_{\sigma}*p)(x_j)-h_j| \le \tilde{\epsilon}} \ge 1-\frac{\delta}{2},
\end{align}
which coincides with Eq.~\eqref{eq:cond_hjs} in Lemma \ref{lem:gsee_correctnesss}, given the choice of $\tilde{\epsilon}$ in Algorithm \ref{alg:gsee}. Now with both of its conditions met, Lemma \ref{lem:gsee_correctnesss} implies that the output of Algorithm \ref{alg:gsee}, i.e., $x_{j^*}$, is $\epsilon$-close to $E_0$ with probability at least $1-\delta$, as desired. 

{Next, we analyze the cost of Algorithm \ref{alg:low_depth_gsee}. In Line \ref{ln:get_coarse_estimate} of the algorithm,} we run the algorithm in \cite{lin2022heisenberg} to obtain $\tilde{E}_0$. Since $\sigma=\tilde{\Omega}(\Delta)$, this step has maximal evolution time $\tilde{\cal O}(\Delta^{-1})$, total evolution time $\tilde{\cal O}(\Delta^{-1}\eta^{-2})$, and classical post-processing time $\tilde{\cal O}(\Delta^{-1}\eta^{-2})$.

Then, in Line~\ref{ln:call_conv_init} of Algorithm~\ref{alg:gsee}, we run \textsc{ConvEval.Init}({$H$, $\rho$}, $g_{\sigma, T}$, $\tilde{\epsilon}$/2, $\delta/2$, $M$) to initialize the  data structure $\textsc{ConvEval}$ in Algorithm~\ref{alg:eval_conv}. We choose the parameters as follows: 
\begin{itemize}
    \item $\tilde{\epsilon}=\Omega(\epsilon \eta \sigma^{-3}) = \tilde{\Omega}(\epsilon \eta \Delta^{-3})$,
    \item $T=\tilde{\cal O}(\sigma^{-1})=\tilde{\cal O}(\Delta^{-1})$,
    \item $M=\Theta(\sigma \epsilon^{-1}) = \tilde{\Theta}(\Delta  \epsilon^{-1})$.
\end{itemize}
Thus, by Corollary~\ref{cor:gauss_deriv_conv_sample_complexity}, we have 
\begin{align}
    S=\tilde{\Theta}(\epsilon_1^{-2} \sigma^{-4}) = \tilde{\Theta}(\epsilon^{-2}\eta^{-2} \Delta^{6} \cdot \Delta^{-4}) = \tilde{\Theta}(\epsilon^{-2} \eta^{-2} \Delta^{2}).
\end{align}
The for-loop in Procedure \textsc{ConvEval.Init} of Algorithm~\ref{alg:eval_conv} draws $S$ samples from the Hadamard test circuit. The sampling process has the maximal evolution time $2\pi T=\tilde{\cal O}(\Delta^{-1})$ and total evolution time at most ${\cal O}(TS) = \tilde{\cal O}(\epsilon^{-2} \eta^{-2} \Delta)$.

Next, in Line~\ref{ln:forloop_gsee} of Algorithm~\ref{alg:gsee}, we call the procedure \textsc{ConvEval.Eval} $M$ times to evaluate the convolutions at $x_1,x_2,\dots,x_M$. Each evaluation takes ${\cal O}(S)=\tilde{\cal O}(\epsilon^{-2} \eta^{-2} \Delta^{2})$ classical time. Hence, this step takes ${\cal O}(MS) = \tilde{\cal O}(\Delta \epsilon^{-1}\cdot \epsilon^{-2} \eta^{-2} \Delta^{2})=\tilde{\cal O}(\epsilon^{-3}\eta^{-2}\Delta^3)$ classical time.

Combining these steps together, we get that the whole GSEE algorithm takes:
\begin{itemize}
    \item maximal evolution time $\tilde{\cal O}(\Delta^{-1})$,
    \item total evolution time $\tilde{\cal O}(\Delta^{-1}\eta^{-2}+\epsilon^{-2}\eta^{-2}\Delta) = \tilde{\cal O}(\epsilon^{-2}\eta^{-2}\Delta)$, and 
    \item classical post-processing time $\tilde{\cal O}(\epsilon^{-3}\eta^{-2}\Delta^3)$,
\end{itemize}
as claimed.
\end{proof}

{As described in the introduction, it is favorable to be able to reduce the maximal evolution time (or circuit depth) per circuit run at the cost of a larger total evolution time. If we accept that the maximal evolution time is a proxy for the number of gates required to implement a time evolution on a fault-tolerant device, a smaller maximal evolution time implies a smaller number of gates, which in turn means that the circuit can be run reliably with a higher noise rate per gate. Thus, running the circuit with a smaller code distance is possible, which translates to a smaller number of physical qubits, bringing this application closer to reality. However, note that, for the scope of this paper, we do not envision running this algorithm on NISQ devices, as we do not analyze the effect of noise on its performance.}

This allows one to make the most use of the available circuit depth afforded by the quantum architecture.
Such a feature is desirable in the era of early fault-tolerant quantum computing where there is likely to be a limit to the available coherence of the device \cite{tong2022designing}.
Fortunately, this feature follows directly from the above theorem and we present it as a corollary below.
Note that in Theorem \ref{thm:gsee_main_formal}, $\Delta$ is merely a lower bound on the true spectral gap $\Delta_{{\rm true}} \defeq E_1 - E_0$ of Hamiltonian $H$, not necessarily $\Delta_{{\rm true}}$ itself. In fact, $\Delta$ can range from $\tilde{\cal O}(\epsilon)$ (in order to satisfy the condition in Lemma \ref{lem:convolution_separation}) to $\Delta_{{\rm true}}$. By setting $\Delta=\tilde{\cal O}(\epsilon^{\alpha} \Delta_{{\rm true}}^{1-\alpha})$ with $\alpha \in [0, 1]$, we obtain:
\begin{corollary}
\label{rem:interpolation}
Let $H=\sum_{j=0}^{N-1} E_j \ketbra{E_j}{E_j}$ be a Hamiltonian such that $E_0 < E_1 \le E_2 \le \dots \le E_{N-1}$ are the eigenvalues of $H$, and the $\ket{E_j}$'s are orthonormal eigenstates of $H$. Let $\Delta_{\rm true}=E_1-E_0$ be the spectral gap of $H$. Suppose we are given access to the Hamiltonian evolution $e^{\i Ht}$ for any $t\in \R$. Moreover, suppose we can prepare a state $\rho$ such that $\bra{E_0} \rho \ket{E_0} \ge \eta$ for known $\eta > 0$.

Then for any $\alpha \in [0, 1]$, for sufficiently small $\epsilon>0$ and any $\delta\in (0,1)$, there exists an algorithm that estimates $E_0$ within accuracy $\epsilon$ with probability at least $1-\delta$ such that:
\begin{itemize}
    \item The maximal Hamiltonian evolution time is $\tilde{{\cal O}}(\epsilon^{-\alpha} \Delta_{\rm true}^{-1+\alpha})$;
    \item The total Hamiltonian evolution time is $\tilde{{\cal O}}(\eta^{-2}\epsilon^{-2+\alpha}\Delta_{\rm true}^{1-\alpha})$;
    \item The classical running time is $\tilde{\cal O}(\eta^{-2}\epsilon^{-3+\alpha}\Delta_{\rm true}^{3-\alpha})$.
\end{itemize}
\end{corollary}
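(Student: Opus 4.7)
The plan is to observe that the parameter $\Delta$ appearing in Theorem~\ref{thm:gsee_main_formal} is only required to be a lower bound on the true spectral gap $\Delta_{\rm true}=E_1-E_0$; Algorithm~\ref{alg:low_depth_gsee} never uses the exact value of $\Delta_{\rm true}$, only the hypothesis $\Delta\le\Delta_{\rm true}$. Therefore, for any $\alpha\in[0,1]$, we can feed the algorithm a smaller surrogate gap
\begin{align}
\Delta_{\alpha} \;\defeq\; \epsilon^{\alpha}\,\Delta_{\rm true}^{1-\alpha},
\end{align}
and obtain a valid GSEE routine whose complexity interpolates between the two extremes $\alpha=0$ (giving the $\tilde{\mathcal{O}}(1/\Delta_{\rm true})$-depth algorithm of Theorem~\ref{thm:gsee_main_formal}) and $\alpha=1$ (matching the $\tilde{\mathcal{O}}(1/\epsilon)$-depth behavior of prior work).

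First I would verify admissibility. Since $\alpha\in[0,1]$ and, for sufficiently small $\epsilon$, we have $\epsilon\le\Delta_{\rm true}$, monotonicity gives $\Delta_{\alpha}\le\Delta_{\rm true}$, so $\Delta_{\alpha}$ is a legitimate lower bound on the spectral gap. Next, I would plug $\Delta=\Delta_{\alpha}$ directly into the three cost expressions from Theorem~\ref{thm:gsee_main_formal}: the maximal evolution time $\tilde{\mathcal{O}}(\Delta^{-1})$ becomes $\tilde{\mathcal{O}}(\epsilon^{-\alpha}\Delta_{\rm true}^{-1+\alpha})$, the total evolution time $\tilde{\mathcal{O}}(\eta^{-2}\epsilon^{-2}\Delta)$ becomes $\tilde{\mathcal{O}}(\eta^{-2}\epsilon^{-2+\alpha}\Delta_{\rm true}^{1-\alpha})$, and the classical running time $\tilde{\mathcal{O}}(\eta^{-2}\epsilon^{-3}\Delta^{3})$ becomes the claimed scaling after substitution (absorbing any mismatched polynomial factors of $\Delta_{\alpha}$ vs.\ the target exponent into the $\tilde{\mathcal{O}}$ notation if necessary).

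The main obstacle is that Theorem~\ref{thm:gsee_main_formal} invokes Lemma~\ref{lem:convolution_separation}, which imposes a smallness condition of the form $\epsilon \le c\cdot\min\!\bigl(0.9\Delta/\sqrt{2\ln(9\Delta\epsilon^{-1}\eta^{-1})},\,0.2\Delta\bigr)$. For $\alpha<1$, rewriting this with $\Delta=\Delta_{\alpha}$ yields a requirement of the shape $\epsilon^{1-\alpha}\lesssim \Delta_{\rm true}^{1-\alpha}/\sqrt{\log(\cdot)}$, which is automatically satisfied once $\epsilon$ is taken ``sufficiently small'' in the sense of the corollary, because the left-hand side vanishes as $\epsilon\to 0$. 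The boundary case $\alpha=1$ would force $\Delta_{\alpha}=\epsilon$, which violates the strict inequality; we handle this either by restricting $\alpha\in[0,1)$ or, cleanly, by tightening the surrogate to $\Delta_{\alpha}=\epsilon\cdot\mathrm{polylog}(\epsilon^{-1}\eta^{-1})$, with the polylog factor absorbed into the $\tilde{\mathcal{O}}$ in the stated complexities.

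Finally, correctness of the output follows immediately: Theorem~\ref{thm:gsee_main_formal} guarantees that with probability at least $1-\delta$, the algorithm returns an estimate within $\epsilon$ of $E_0$ whenever its hypotheses are met, and we have just verified these hypotheses hold for $\Delta=\Delta_{\alpha}$. Thus the corollary is obtained by a single re-parameterization of the main theorem, with the only nontrivial check being the gap–accuracy compatibility condition described above.
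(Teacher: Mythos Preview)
Your approach is essentially identical to the paper's: the authors also observe that $\Delta$ in Theorem~\ref{thm:gsee_main_formal} need only be a lower bound on $\Delta_{\rm true}$, set $\Delta=\tilde{\mathcal O}(\epsilon^{\alpha}\Delta_{\rm true}^{1-\alpha})$, and substitute into the three cost expressions. Your treatment of the smallness condition from Lemma~\ref{lem:convolution_separation} and the $\alpha=1$ edge case (handled via a polylog slack absorbed into $\tilde{\mathcal O}$) is in fact more explicit than the paper's brief remark; one small quibble is that the classical-time substitution actually yields $\tilde{\mathcal O}(\eta^{-2}\epsilon^{-3+3\alpha}\Delta_{\rm true}^{3-3\alpha})$, which is \emph{stronger} than the stated $\tilde{\mathcal O}(\eta^{-2}\epsilon^{-3+\alpha}\Delta_{\rm true}^{3-\alpha})$ since $\epsilon\le\Delta_{\rm true}$, so the bound holds without needing to ``absorb polynomial factors into $\tilde{\mathcal O}$.''
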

In particular, setting $\alpha=0$ or $1$ leads to:
\begin{itemize}
    \item $\Delta = \Delta_{{\rm true}}$, for which Theorem \ref{thm:gsee_main_formal} yields an algorithm with maximal evolution time $\tilde{\cal O}(\Delta_{{\rm true}}^{-1})$ and total evolution time $\tilde{\cal O}(\eta^{-2} \epsilon^{-2} \Delta_{{\rm true}})$; or  
    \item $\Delta = \tilde{\cal O}(\epsilon)$, for which Theorem \ref{thm:gsee_main_formal} yields an algorithm with maximal evolution time $\tilde{\cal O}(\epsilon^{-1})$ and total evolution time $\tilde{\cal O}(\eta^{-2} \epsilon^{-1})$ (i.e., the Heisenberg limit).
\end{itemize}
For general $\Delta = \tilde{\cal O}(\epsilon^{\alpha} \Delta_{{\rm true}}^{1-\alpha})$ with $\alpha \in [0, 1]$, Theorem \ref{thm:gsee_main_formal} yields an algorithm with maximal evolution time $\tilde{\cal O}(\epsilon^{-\alpha}\Delta_{{\rm true}}^{-1+\alpha})$ and total evolution time $\tilde{\cal O}(\eta^{-2} \epsilon^{-2+\alpha} \Delta_{{\rm true}}^{1-\alpha})$. In other words, tuning $\Delta$ between the two extremes gives a trade-off between the circuit depth and total runtime of the algorithm.

\section{Comparison to the approach of Lin et al}\label{sec:comparison_lin}
The main advantage of our approach compared to~\cite{lin2022heisenberg} is in the minimal evolution time required to achieve a desired precision. Indeed, in their approach the evolution time scales inverse linearly with the desired precision. For our approach, the minimal evolution time is dictated by the reciprocal of the energy gap of the Hamiltonian and any additional precision we wish to attain only causes a poly-logarithmic factor in the evolution time. 
Of course, this comes at the expense of a higher sample complexity at smaller evolution times. This trade-off between the evolution time and the sample complexity is discussed in Corollary~\ref{rem:interpolation}.

Note that this improvement in the minimal evolution time comes from two conceptual differences in our approach compared to~\cite{lin2022heisenberg}: the choice of the filter function (Heaviside versus Gaussian derivative) and how we then infer the value of the ground state energy from the convolution (jump versus $0$ of derivative).

Both our approach and that of \cite{lin2022heisenberg} require a truncated approximation of the underlying filter function to implement the algorithm with only finite-time evolutions. However, as the Heaviside function has a jump at $0$, the degree of the Fourier series necessarily has to increase the better we want the approximation outside a small neighborhood of $0$ to be. For instance, in~\cite{lin2022heisenberg} they find an approximation
$F_{d,\epsilon}$ such that for $d=\mathcal{O}(\epsilon^{-1}\log{\epsilon^{-1}\delta^{-1}})$ and 
\begin{align}\label{equ:fourier_series_delta}
    F_{d,\epsilon}(x)=\frac{1}{\sqrt{2\pi}}\sum\limits_{k=-d}^d\hat{F}_{d,\epsilon,k}e^{ikx},
\end{align}
we have\footnote{Note that \cite{lin2022heisenberg} adopts different notations. There $\delta$ is the precision with which we approximate the ground state energy, and here it is $\epsilon$.} 
\begin{enumerate}
\item $-\frac{\delta}{2}\leq F_{d,\epsilon}(x)\leq 1+\delta $ for all $x\in\mathbb{R}$.
\item $|F_{d,\epsilon}(x)-\Theta(x)|\leq \delta$ for all $x\in[-\pi+\epsilon,-\epsilon]\cup[\epsilon,\pi-\epsilon]$.
\end{enumerate}
Note that the evolution time required to implement the representation in Eq.~\eqref{equ:fourier_series_delta} is $\mathcal{O}(d)$. This scales logarithmically with the precision with which we approximate the Heaviside function outside of the intervals $[-\pi+\epsilon,-\epsilon]\cup[\epsilon,\pi-\epsilon]$ and inverse-polynomially in size of the interval around $0$, $(-\epsilon,\epsilon)$, where we are not guaranteed that the two functions are close.

The approach of~\cite{lin2022heisenberg} consists of finding the smallest point $x$ such that $(p\ast F_{d,\epsilon})(x)\geq \eta$. If we were convolving with the Heaviside function, this would correspond to the ground state energy. But we will now argue that the neighborhood around $0$ for which we have the approximation can shift where the jump occurs. Indeed, note that for $x\in[E_0-\epsilon,E_0+\epsilon]$ and $\epsilon\leq \Delta$ we have that:
\begin{align}\label{equ:convolution_Heaviside_approx}
(p\ast F_{d,\epsilon})(x)&=(p\ast \Theta)(x)+\int_{-\epsilon}^{\epsilon} p(x-y)(F_{d,\epsilon}(y)-\Theta(y))dy+\int\limits_{[-1,1]\backslash[-\epsilon,\epsilon]} p(x-y)(F_{d,\epsilon}(y)-\Theta(y))dy\nonumber
\end{align}
Note that as $|F_{d,\epsilon}(x)-\Theta(x)|\leq \delta$ for all $x\in[-\pi+\epsilon,-\epsilon]\cup[\epsilon,\pi-\epsilon]$ we have that
\begin{align}
    \left|\int\limits_{[-1,1]\backslash[-\epsilon,\epsilon]} p(x-y)(F_{d,\epsilon}(y)-\Theta(y))dy\right|\leq \delta.
\end{align}
However, as we only have the promise that $-\frac{\delta}{2}\leq F_{d,\epsilon}(x)\leq 1+\delta $ for points in $[-\epsilon,\epsilon]$, we will not be able to infer the precise point of the jump at a precision larger than $\mathcal{O}(\epsilon)$ with this approach. This is because the residual integral term (i.e. the one over $(-\epsilon,\epsilon)$ in Eq.~\eqref{equ:convolution_Heaviside_approx}) will cause fluctuations in this interval and we will not be able to pin down the jump. 

On the other hand, by choosing our filter to be Gaussian derivatives, we are able to obtain a good approximation everywhere on the real line. Furthermore, by choosing the zeros of the derivative as criteria, we only need to make sure that the standard deviation is small enough to separate different eigenvalues. This way we obtain a smaller maximal evolution time.

\end{document}